\documentclass[11pt,a4paper]{article}
\usepackage{amsmath,amssymb,titling,authblk}
\usepackage{amsthm}
\usepackage[top=2.3cm,right=2.3cm,left=2.3cm,bottom=2.3cm]{geometry}
\usepackage{graphicx}
\usepackage{color} 
\usepackage{slashed}
\usepackage[pdfstartview=FitH,colorlinks=true,linkcolor=blue,anchorcolor=red,citecolor=magenta,urlcolor=blue]{hyperref}
\usepackage{amscd}
\usepackage[normalem]{ulem}
\usepackage{appendix}
\usepackage{bbold}
\usepackage{mathrsfs}
\usepackage{pdfsync}
\usepackage{bbm}
\usepackage{bm}
\usepackage[arrow,matrix,curve]{xy}
\usepackage{bbding}
\usepackage{wasysym}
\usepackage{booktabs}
\usepackage{siunitx}
\usepackage{cite}
\usepackage{epsf}
\usepackage{epsfig}
\usepackage{wrapfig}
\usepackage[utf8]{inputenc}
\usepackage{subcaption}

\DeclareCaptionFormat{custom}
{%
    \textbf{#1#2}\textit{\small #3}
}
\captionsetup{format=custom}

\allowdisplaybreaks[4]
\numberwithin{equation}{section}

\definecolor{verde}{cmyk}{.83,.21,1,.08}
\definecolor{darkorchid}{rgb}{0.6, 0.2, 0.8}
\definecolor{darkgreen}{rgb}{0,.5,0}

\def\({\left(}
\def\){\right)}
\def\[{\left[}
\def\]{\right]}

\newcommand{\ii}{\mathrm{i}}

\newcommand{\dd}{\mathrm{d}}

\newcommand{\be}{\begin{equation}}
\newcommand{\ee}{\end{equation}}
\newcommand{\bea}{\begin{eqnarray}}
\newcommand{\eea}{\end{eqnarray}}
\newcommand{\del}{\partial}

\newcommand{\Tr}[1]{\:{\rm Tr}\,#1}
\newcommand{\la}{\label}

\newtheorem{proposition}{Proposition}[section]

\begin{document}

\title{Action principle for $\kappa$-Minkowski noncommutative $U(1)$ gauge theory from Lie-Poisson electrodynamics
}

\author[1,2]{M. A. Kurkov}

\affil[ ]{}

\affil[1]{\textit{\footnotesize Dipartimento di Fisica ``E. Pancini'', Universit\`a di Napoli Federico II, Complesso Universitario di Monte S. Angelo Edificio 6, via Cintia, 80126 Napoli, Italy.}}
\affil[2]{\textit{\footnotesize INFN-Sezione di Napoli, Complesso Universitario di Monte S. Angelo Edificio 6, via Cintia, 80126 Napoli, Italy.}}
\affil[ ]{}
\affil[ ]{\footnotesize e-mail: \texttt{max.kurkov@gmail.com}}
\maketitle
\begin{abstract}\noindent
Lie-Poisson electrodynamics describes a semiclassical approximation of noncommutative $U(1)$ gauge theories with Lie-algebra-type noncommutativities.
We obtain a gauge-invariant local classical action with the correct commutative limit for a generic Lie-Poisson gauge model, and present the corresponding deformed Maxwell equations. At the semiclassical level, our results provide a relatively simple solution to the old problem of constructing an admissible Lagrangian formulation for the $U(1)$ gauge theory on the four-dimensional $\kappa$-Minkowski space-time.  {On the one hand,} we derive an explicit expression for the classical action which yields the deformed Maxwell equations previously proposed in~\emph{JHEP 11 (2023) 200} for this noncommutativity on general grounds.  On the other hand, according to our analysis, these Maxwell equations follow from the action proposed in the present paper as the Euler-Lagrange equations for any Lie-algebra-type noncommutativity.
\end{abstract}

%\newpage

\section{ Introduction}
Noncommutative geometric structure of space-time~\cite{Doplicher:1994tu}, arising in different approaches to quantum gravity~\cite{SW, Perem}, alters the short-distance behavior of field-theoretical models. This fact motivates our interest in noncommutative field theory~\cite{Szabo}, in particular in noncommutative gauge theory~\cite{Wallet}.

Consider a manifold $\mathcal{M}\simeq\mathbb{R}^d$ representing space-time, equipped with a Kontsevich star product of smooth functions on it:
\be
f \star g = f \cdot g + \frac{\ii}{2} \{f,g\} + \cdots, \qquad f,g \in \mathcal{C}^{\infty}(\mathcal{M}), \la{sarp}
\ee
where $\{\,,\}$ stands for a given Poisson bracket on $\mathcal{M}$, and the remaining terms, denoted by dots, contain higher derivatives of  $f$ and $g$.

Many noncommutativities that have attracted significant attention in the literature, including  
the renowned $\kappa$-Minkowski case~\cite{lukierski,Majid:1994cy,Kosinski3,Dimitrijevic:2003pn,Meljanac:2007xb,Arzano:2009ci,Dimitrijevic:2005xw,ML1,Harikumar:2011um,Pachol, Juric:2018qdi,Mathieu:2020ccc,Lizzi:2020tci,Fabiano:2023xke}, are of the Lie-algebra type:
\be
[x^{\mu},x^{\nu}]_{\star} = \ii \,\mathcal{C}^{\mu\nu}_{\lambda} x^{\lambda}. \la{xcomrel}
\ee
Here $x^{\mu}$ denote the local coordinates on $\mathcal{M}$; the deformation parameters $\mathcal{C}^{\mu\nu}_{\lambda}$ are the structure constants of a given Lie algebra $\mathfrak{g}$; and the square brackets stand for the star-commutator:
\be
[f,g]_{\star} = f \star g - g \star f = \ii\,\{f,g\} +\cdots. \la{starc}
\ee
Throughout this article, we discuss Lie-algebra-type noncommutativities only, so our Poisson bracket  has the form:
\be
\{f,g\} = x^{\lambda} \, \mathcal{C}^{\mu\nu}_{\lambda} \, \partial_{\mu}f \,\partial_{\nu} g. \la{Pbre}
\ee

An important consequence of space-time noncommutatative geometry is a deformation of the gauge algebra. Consider two infinitesimal gauge transformations $\delta_f$ and $\delta_g$ of some dynamical variable, where the subscripts $f$ and $g$ indicate the corresponding gauge parameters. While in the usual $U(1)$ gauge theory these transformations commute, the noncommutative setting gives rise to the deformed non-Abelian algebra
\be
[\delta_f, \delta_g] = \delta_{-\ii[f,g]_{\star}}. \la{ncalg}
\ee
In the novel approach to noncommutative gauge theory proposed in~\cite{PatriziaVlad}, this relation is taken as a starting point, see also~\cite{Kupriyanov:2021cws}.

In the semiclassical approximation, which corresponds to slowly varying fields, the higher-derivative terms in~\eqref{starc} are negligible, and the algebra~\eqref{ncalg} reduces to the Poisson gauge algebra:
\be
[\delta_f, \delta_g] = \delta_{\{f,g\}}. \la{pga}
\ee
A deformation of the $U(1)$  theory, where the infinitesimal gauge transformations obey~\eqref{pga}, is called Lie-Poisson electrodynamics or Lie-Poisson gauge theory~\cite{our2023}. Being a field theory on a Poisson manifold, the Lie-Poisson gauge formalism provides the semiclassical limit of electrodynamics on Lie-algebra-type noncommutative space-time. 

We emphasise that this approximation is more than just a first-order correction in the deformation parameters $\mathcal{C}^{\mu\nu}_{\lambda}$. Although the semiclassical regime assumes that the fields vary slowly, it imposes no restrictions on their magnitudes. In particular, the product of the gauge field $A$ with $\mathcal{C}$ is not required to be small. While the algebra~\eqref{pga} contains only linear dependence on $\mathcal{C}$, as we shall see below, the gauge-covariant field strength and other constituents of Lie-Poisson electrodynamics involve \emph{all} orders in $\mathcal{C}$ (and in $A$), not just the leading one.

In recent years, Lie-Poisson electrodynamics has undergone rapid development~\cite{our2023,Kupriyanov:2019cug,Kurkov:2021kxa,Kupriyanov:2020axe,Abla:2022wfz,Kupriyanov:2021aet,Kupriyanov:2021cws,Kupriyanov:2022ohu,Kupriyanov:2023qot,DiCosmo:2023wth,Bascone:2024mxs,Kupriyanov:2024dny,Sharapov:2024bbu, Basilio:2024bir,Abla:2024wtr}. In the absence of matter, the deformed gauge transformations and the deformed Maxwell equations have been constructed for generic Lie-algebra-type noncommutativity~\cite{Kupriyanov:2021cws,our2023}. Charged point-like particles were studied in detail in~\cite{Kupriyanov:2024dny,Basilio:2024bir}. General prescriptions for the charged matter fields were outlined in~\cite{Sharapov:2024bbu}. The  present paper continues the research line of~\cite{our2023,Kupriyanov:2019cug,Kurkov:2021kxa,Kupriyanov:2020axe,Abla:2022wfz,Kupriyanov:2021aet,Kupriyanov:2021cws,Kupriyanov:2022ohu,Kupriyanov:2023qot,DiCosmo:2023wth,Bascone:2024mxs,Kupriyanov:2024dny,Sharapov:2024bbu, Basilio:2024bir,Abla:2024wtr}. Of course,  there are other studies on gauge theories on Poisson manifolds~\cite{Bimonte:1996fq, Arefeva:1999pkt, Jurco:2000fs, Meusburger:2003hc, Kennedy:2012gk, Cattaneo:2001bp}. A comparison of our approach with related ones can be found in~\cite{our2023}.  

Despite impressive progress, some important questions remain open. So far, an admissible Lagrangian formulation of the deformed Maxwell equations has been obtained for unimodular algebras $\mathfrak{g}$ only, that is, when the structure constants defining the noncommutativity~\eqref{xcomrel} obey the relation\footnote{
 {
The condition~\eqref{compcond} ensures that the linear map $\mathrm{ad}_{q} : \mathfrak{g} \longrightarrow \mathfrak{g}$ is traceless for all $q\in \mathfrak{g}$, that is, by definition, the Lie algebra $\mathfrak{g}$ is unimodular~\cite{Dufour}. In~\cite{Hersent,Mercati} unimodular Lie groups are considered in the  
noncommutative field-theoretical context. As explained in~\cite{Toronto}, the unimodularity of a Lie group implies the unimodularity of the corresponding Lie algebra; for connected Lie groups the converse is also true.
}}
\be
\mathcal{C}^{\mu\nu}_{\mu} = 0. \la{compcond}
\ee
While for the $\mathfrak{su}(2)$ and angular noncommutativities this condition is fulfilled, in the $\kappa$-Minkowski case it is not satisfied.

Technically, the problem is the following. The action $S_{\mathbf{g}}$ for the gauge field, defined simply as an integral over  space-time of a gauge-covariant Lagrangian density $\mathcal{L}$, 
\be
\delta_{f} \mathcal{L} = \{ \mathcal{L},f\},
\ee
is not necessarily gauge-invariant,
\be
\delta_{f} S_{\mathbf{g}} = \int_{\mathcal{M}}\, \dd x\, \{ \mathcal{L},f\} \neq 0.
\ee
In particular, when the equality~\eqref{compcond} does not hold, e.g. in the $\kappa$-Minkowski case, the Poisson bracket between two functions is not a total derivative and, therefore, its integral does not vanish~\cite{Kupriyanov:2020axe}. This property is nothing but the semiclassical version of the non-cyclicity of the corresponding $\star$-product,
\be
\int_{\mathcal{M}} \dd x\,\[ \,.\,,\, . \]_{\star}\, \neq 0, 
\ee
which blocks the development of gauge theories on the $\kappa$-Minkowski space.

There have been various attempts to overcome this difficulty, e.g., by inserting a measure $\mu(x)$ in the definition of the classical action in such a way that $\mu$ times the Poisson bracket becomes a total derivative~\cite{Kupriyanov:2020axe}. This proposal works iff the “compatibility condition”
\be
\partial_{\mu} \big(x^{\xi} \,\mathcal{C}_{\xi}^{\mu\nu}\,\mu(x)\big) = 0, \la{compatib}
\ee
is fulfilled. For the $\kappa$-Minkowski case, the most general solution of this system of partial differential equations was obtained in the quoted reference. In~\cite{our2023}, we have shown that, despite a large functional ambiguity, none of these solutions tends to 1 in the commutative limit. Therefore, the gauge-invariant classical action constructed along these lines is not a deformation of the usual Maxwell action and cannot be regarded as admissible.

Of course, some progress has been made in this direction. In~\cite{Wess}, a gauge-invariant action for a non-cyclic star-product was constructed in the two-dimensional case. In~\cite{Dimitrijevic:2005xw,ML1}, the problem was addressed in four dimensions, perturbatively up to linear order in the deformation parameter, using the Seiberg-Witten map.   {The former reference introduces a field-dependent volume factor, which allows one to overcome the above problem of the correct commutative limit of the measure $\mu$.}
An admissible five-dimensional action for the $\kappa$-Minkowski case was obtained in~\cite{Mathieu:2020ywc}. In~\cite{our2023}, in the semiclassical $\kappa$-Minkowski context, we proposed a one-parameter family of four-dimensional gauge-covariant field equations with the correct commutative limit and reasonable constraints generalising the Noether identity; however, the classical action was missing. In~\cite{Kupriyanov:2023qot}, again in the semiclassical context, general prescriptions for building the deformed Maxwell action were outlined for any Lie-algebra-type noncommutativity. The approach of~\cite{Kupriyanov:2023qot} exploits a gauge-covariant field strength~$F^{s}$, which transforms via a Lie derivative, whereas in~\cite{our2023} the field strength $\mathcal{F}$ transforms via a Poisson bracket (see the next section).  For the $\kappa$-Minkowski case Ref.~\cite{Kupriyanov:2023qot} provides explicit formulae for the main constituents of the symplectic groupoid approach, namely the source and the target maps. However, any analysis of the Euler-Lagrange dynamics, and in particular its comparison with the deformed Maxwell equations proposed earlier in~\cite{our2023}, is absent there.

To the best of our knowledge, an explicit expression for a deformed Maxwell action describing the semiclassical regime of four-dimensional noncommutative electrodynamics in the $\kappa$-Minkowski space, has not been obtained so far. The present paper aims to fill this gap by providing a Lagrangian formulation of the field equations from the one-parameter family proposed in~\cite{our2023} on general grounds. Moreover, we develop a Lagrangian description of the deformed Maxwell dynamics for arbitrary Lie-algebra-type noncommutativity, thereby extending the admissible Lagrangian analysis of~\cite{our2023}, applicable to the unimodular case only.

In Sec.~\ref{review}, we introduce the main building blocks of the Lie-Poisson gauge formalism, following Ref.~\cite{our2023} and references therein. The most important section of this article is Sec.~\ref{action}, where we construct a local gauge-invariant classical action for \emph{any} Lie-algebra-type noncommutativity  {and derive the Euler-Lagrange field equations}. In Sec.~\ref{KappaSec}, we apply our findings to the four-dimensional $\kappa$-Minkowski case. 

\section{Lie-Poisson electrodynamics: building blocks}\label{review}
From a technical point of view, the basic elements of Lie-Poisson electrodynamics are two $d \times d$ matrices $\gamma$ and $\rho$, which depend on the gauge field $A_{\mu}(x)$. By definition, $\gamma(A)$ and $\rho(A)$ solve the \emph{master} equations\footnote{Throughout this paper, for any matrix the upper index enumerates rows, while the lower index enumerates columns.}
\be
\gamma_{ \mu }^{\nu} (A) \,\frac{\partial\gamma^{ \xi}_{ \lambda}(A)}{\partial A_{\mu}} 
- \gamma^{ \xi}_{ \mu }(A)\,\frac{\partial \gamma^{ \nu }_{ \lambda} (A)}{\partial A_{\mu}}
  = \mathcal{C}_{ \mu}^{ \nu  \xi}\,\gamma^{ \mu }_{ l}(A), \qquad \gamma^{\nu}_{\lambda}(A)\, \frac{\partial\rho_{\xi}^{\mu}(A)}{\partial A_{\lambda}} \,+ \rho_{\xi}^{\lambda} (A)\,\frac{\partial\gamma_{\lambda}^{\nu}(A)}{\partial A_{\mu}}   = 0, \la{master}
\ee
and they tend to the identity matrices in the commutative limit of vanishing structure constants:
\be
\lim_{\mathcal{C}\to 0} \gamma^{\mu}_{\nu}(A)   = \delta^{\mu}_{\nu}, \qquad \lim_{\mathcal{C}\to 0} \rho^{\mu}_{\nu}(A) = \delta^{\mu}_{\nu}. \la{grcomlim}
\ee  

The main constituents of the formalism, namely the deformed gauge transformations $\delta_f A_{\mu}$, which close the Poisson algebra~\eqref{pga}, the deformed field strength $\mathcal{F}_{\mu\nu}$, and the deformed gauge-covariant derivative $\mathcal{D}_{\mu}$, are constructed in terms of $\gamma$ and $\rho$ as follows:
\bea
\delta_f A_{\mu} &:=&\gamma^{\xi}_{\mu}(A)\,\partial_{\xi} f(x) +\{A_{\mu} ,f\}, \nonumber\\
\mathcal{F}_{\mu\nu}(x) &:=& \rho_{\mu}^{\xi}(A)\,\rho_{\nu}^{\lambda}(A)\big(\gamma_{\xi}^{\sigma}(A)\,\partial_{\sigma}A_{\lambda}
   -\gamma_{\lambda}^{\sigma}(A)\,\partial_{\sigma}A_{\xi}+\{A_{\xi},A_{\lambda}\}\big), \nonumber\\
   \mathcal {D}_{\mu}\psi(x)&:=&\rho_{\mu}^{\nu}(A)\, (\gamma_{\nu}^{\xi}(A) \del_{\xi} \psi+\{A_{\nu},\psi\}) . \la{techel}
\eea
In the last line, $\psi$ denotes an arbitrary field\footnote{ {To avoid confusion, we emphasise that the notation $\psi$ is used in the above sense only and is not related to the spinorial field of the usual electrodynamics.}}  transforming in a covariant manner under gauge transformations,  $\delta_f \psi = \{ \psi, f\}$,  {e.g., the deformed field strength $\mathcal{F}_{\nu\lambda}$.
}

The master equations~\eqref{master}, together with the requirements~\eqref{grcomlim}, yield the desired properties:
\be
[\delta_f, \delta_g] A_{\mu} = \delta_{\{f,g\}} A_{\mu},\qquad\delta_f {\cal F}_{\mu\nu}=\{{\cal F}_{\mu\nu},f\},\qquad \delta_f\left({\cal D}_{\mu}\psi\right)=\{{\cal D}_{\mu}\psi,f\}, \la{trarul}
\ee 
and correct commutative limits\footnote{ {In the commutative limit, the Poisson bracket on the space-time $\mathcal{M}$ vanishes, thus any field, transforming in a gauge-covariant manner $\delta_f \psi = \{ \psi, f\}$, becomes gauge-invariant: $\lim_{\mathcal{C}\to0}\delta_f\psi = 0$. Therefore at $\mathcal{C}\to 0$ the gauge-covariant derivative $\mathcal{D}$ must reduce to the usual partial derivative $\partial$. }}:
\be
\lim_{\mathcal{C}\to 0} \delta_f A_{\mu} = \partial_{\mu}f, \qquad  \lim_{\mathcal{C}\to 0} \mathcal{F}_{\mu\nu} = F_{\mu\nu}, \qquad  \lim_{\mathcal{C}\to 0}\mathcal {D}_{\mu}\psi  = \partial_{\mu}\psi, \la{goodcomlims}
\ee
where
\be
F_{\mu\nu} = \partial_{\mu} A_{\nu} - \partial_{\nu} A_{\mu}
\ee
is the usual Abelian field-strength.

The ``universal” solutions of the master equations~\eqref{master}, which are valid for any Lie-algebra-type noncommutativity, can be constructed explicitly in terms of matrix-valued functions~\cite{Kupriyanov:2021cws,Kupriyanov:2022ohu}:
\be
\gamma_{\mathbf{u}}(A) = G(\hat{A}), \qquad \rho_{\mathbf{u}}(A) = \frac{1}{G(-\hat{A})}, \qquad \hat{A}^{\mu}_{\nu} :=   \mathcal{C}^{\sigma\mu}_{\nu} A_{\sigma},  \la{unive}
\ee
 {where the form factors are given by
\bea
G(s) = \frac{s}{2} + \frac{s}{2}\,\mathrm{coth}\,\frac{s}{2} = \sum_{k=0}^{\infty} \frac{B_k^{+}\,s^k}{k!} , \qquad
\frac{1}{G(-s)} = \frac{\exp{(s)}-1}{s} = \sum_{k=0}^{\infty} \frac{s^k}{(k+1)!},
\la{Gdef}
\eea
with $B_k^{+}$, $k =0,1,2,...$, being the Bernoulli numbers, and the subscript ``${\mathbf{u}}$'' stands for ``universal''.}

Remarkably, any invertible field redefinition that reduces to the identity map in the commutative limit,
\be
A_{\mu}(x)\longrightarrow  \tilde{A}_{\mu}\big(A(x)\big), \qquad \lim_{\mathcal{C}\to 0}\tilde{A}_{\mu}(A) = A_{\mu}(x), \la{SW}
\ee
generates new admissible solutions of the master equations~\eqref{master}:
\be
{\gamma}^{\mu}_{\nu} ({A}) = \Bigg(\big[\gamma_{\mathbf{u}}\big]_{\xi}^{\mu}(\tilde{A})\cdot \frac{\partial {A}_{\nu}}{\partial \tilde{A}_{\xi}}\Bigg)\Bigg|_{\tilde{A} = \tilde{A}({A})}, \quad {\rho}_{\nu}^{\mu}({A})  = \Bigg(\frac{\partial \tilde A_{\xi}}{\partial {A}_{\mu}}\cdot \big[\rho_{\mathbf{u}}\big]_{\nu}^{\xi}(\tilde{A})\Bigg)\Bigg|_{\tilde{A} = \tilde{A}({A})}. \la{newGR}
\ee
From now on, we shall assume that $\gamma$ and $\rho$ are either given by the “universal” expressions~\eqref{unive} or can be obtained from them through the relations~\eqref{SW} and~\eqref{newGR}. The field-theoretical models based on the former and the latter choices of $\gamma$ and $\rho$ will be referred to as the “universal” and “universal-equivalent” realizations of Lie-Poisson electrodynamics, respectively.
  \\

{{\noindent \footnotesize{{\bf Remark.} \emph{The differential-geometric meaning of $\gamma$ and $\rho$ was clarified in~\cite{Kupriyanov:2023qot,Kupriyanov:2024dny,Sharapov:2024bbu}. Let $G$ be a Lie group whose Lie algebra is $\mathfrak{g}$, and let $p_{\mu}$ be local coordinates on $G$. Then  $\gamma^{\mu}_{\nu}(A)$ and $\rho^{\mu}_{\nu}(A)$ are the local components of the left-invariant vector fields
\be
\gamma^{\mu}(p) = \gamma_{\nu}^{\mu}(p)\,\frac{\partial}{\partial p_{\nu}}, \qquad \mu =0,...,d-1,
\ee
and the right-invariant one-forms
\be
\rho_{\mu}(p) = \rho^{\nu}_{\mu}(p)\,\dd p_{\nu}, \qquad \mu =0,...,d-1,
\ee
on $G$, respectively, evaluated at $p_{\mu} = A_{\mu}(x)$. From this point of view, the field redefinition~\eqref{SW} corresponds to a change of local coordinates $p_{\mu} \longrightarrow \tilde p_{\mu}(p)$ on $G$. The components of the vector fields $\gamma^{\mu}$ and one-forms $\rho_{\mu}$ transform accordingly, cf. Eq.~\eqref{newGR}.}}} \\

By using the basic notions~\eqref{techel}, in the next section we shall construct a simple expression for the gauge-invariant classical action which is valid for any Lie-algebra-type noncommutativity.

\section{Gauge-invariant classical action} \la{action}
The main idea is to introduce an $A$-dependent integrating factor $M_{A}(x)$ which converts gauge-covariant expressions into gauge-invariant ones (up to a total derivative).  First, we construct it and after that, we shall focus on the action.
\subsection*{a. Integrating factor}
Let us define $M_A(x)$ as
\be
M_{A}(x) := \Big(\det{\big[\gamma{\big(A(x)\big)}\,\rho\big(A(x)\big)\big]}\Big)^{-1}\, . \la{mudef}
\ee 
Eq.~\eqref{grcomlim} implies the correct commutative limit:
\be
 \lim_{\mathcal{C}\to 0}M_{A}(x) = 1. \la{muAcomlim}
\ee
In order to prove that it is indeed the required integrating factor (see Proposition~\ref{prop3} below), we shall calculate this object explicitly and study its transformation properties.
\begin{proposition}\label{prop1}
Explicit expressions for $M_{A}(x)$ read:
\begin{itemize} 
\item{For the universal realization of Lie-Poisson electrodynamics,
\be
M_{A}(x) = \exp{\left(\mathcal{C}_{\mu}^{\mu\sigma}A_{\sigma}(x)\right)}; \la{mufirs}
\ee
}
\item{ For the universal-equivalent realization of Lie-Poisson electrodynamics,
\be
M_{A}(x) = \exp{\left(\mathcal{C}_{\mu}^{\mu\sigma}\tilde{A}_{\sigma}(A(x))\right)}. \la{musec}
\ee
}
\end{itemize}
\end{proposition}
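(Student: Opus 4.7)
The plan is to reduce both formulas in Proposition~\ref{prop1} to one matrix-function computation in the universal realization, and then pull the result back through the change-of-variables~\eqref{newGR}. Starting from~\eqref{unive}, I observe that $\hat A$ and $-\hat A$ trivially commute, so the matrix functions $G(\hat A)$ and $G(-\hat A)^{-1}$ commute as well, and their product can be treated as the scalar function $\phi(s):=G(s)/G(-s)$ evaluated on $\hat A$. Concretely, $\gamma_{\mathbf{u}}(A)\,\rho_{\mathbf{u}}(A)=\phi(\hat A)$.

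The key step is to simplify $\phi$. Using $\coth(s/2)=(e^s+1)/(e^s-1)$ in the definition~\eqref{Gdef}, I rewrite $G(s)=s\,e^s/(e^s-1)$ and $G(-s)=s/(e^s-1)$, whence $\phi(s)=e^s$. Therefore $\gamma_{\mathbf{u}}(A)\,\rho_{\mathbf{u}}(A)=e^{\hat A}$, and the standard identity $\det e^{\hat A}=\exp(\mathrm{tr}\,\hat A)$ gives $\det[\gamma_{\mathbf{u}}\rho_{\mathbf{u}}]=\exp(\mathcal{C}^{\sigma\mu}_{\mu}A_\sigma)$. Antisymmetry of the structure constants in the upper indices turns this into $\exp(-\mathcal{C}_{\mu}^{\mu\sigma}A_\sigma)$, and inverting as prescribed by~\eqref{mudef} yields Eq.~\eqref{mufirs}.

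For the universal-equivalent realization I would expand the product $\gamma(A)\rho(A)$ using the transformation rule~\eqref{newGR}. Two Jacobian factors, $\partial A_\lambda/\partial\tilde A_\xi$ and $\partial\tilde A_\eta/\partial A_\lambda$, appear in adjacent positions and contract by the chain rule to $\delta^{\eta}_{\xi}$, so that $(\gamma\rho)(A)=(\gamma_{\mathbf{u}}\rho_{\mathbf{u}})\bigl(\tilde A(A)\bigr)$. Taking determinants and applying the result of the universal case with $A\mapsto\tilde A(A)$ delivers Eq.~\eqref{musec}; the correct commutative limit~\eqref{muAcomlim} is automatic from~\eqref{SW}.

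The only substantive ingredient is the identity $G(s)/G(-s)=e^s$, which expresses the origin of $G$ as a Baker--Campbell--Hausdorff-type form factor and is the real engine of the argument; the remaining steps are elementary linear algebra plus the chain rule. The most error-prone part is the index bookkeeping that converts $\mathcal{C}^{\sigma\mu}_{\mu}$ into $-\mathcal{C}_{\mu}^{\mu\sigma}$, but this is a sign check rather than an obstacle.
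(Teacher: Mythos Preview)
Your proof is correct and follows essentially the same route as the paper: the identity $G(s)/G(-s)=e^{s}$ together with $\det e^{\hat A}=\exp(\mathrm{tr}\,\hat A)$ handles the universal case, and the chain-rule cancellation of the Jacobians in~\eqref{newGR} reduces the universal-equivalent case to the first one. The only cosmetic difference is that you derive $G(s)/G(-s)=e^{s}$ explicitly from the $\coth$ representation, whereas the paper simply states it as an algebraic identity.
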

\begin{proof}
To prove the first statement, we notice that the form factor $G(s)$, defining the universal expressions~\eqref{unive}, satisfies the algebraic identity
\be
\frac{G(s)}{G(-s)} = \exp{(s)},
\ee 
therefore, for the matrices $\gamma_{\mathbf{u}}$ and $\rho_{\mathbf{u}}$ we have
\be
\gamma_{\mathbf{u}}(A) \rho_{\mathbf{u}}(A) = \frac{G(\hat{A})}{G(-\hat{A})}  = \exp{(\hat{A})}.
\ee
Consequently, by using the relation $\ln \det = \Tr \ln$, we find
\bea 
 \det{\big[\gamma_{\mathbf{u}}(A) \rho_{\mathbf{u}}(A)\big]}= \det{\Big[\exp{\big(\hat{A}(x)\big)}\Big]} = \exp{\big( \Tr{\hat{A}(x) }\big)} =\exp{\left(\mathcal{C}_{\mu}^{\sigma\mu}A_{\sigma}(x)\right)}. \la{interme}
\eea
Substituting this expression into the definition~\eqref{mudef} of  $M_{A}$, and using the skew-symmetry of the structure constants in the upper indices, we arrive at the desired relation~\eqref{mufirs}.

To prove the second statement, we notice that under the field redefinition~\eqref{SW}, the relation~\eqref{newGR} implies
\be
 \gamma_{\xi}^{\mu}(A)\, \rho_{\nu}^{\xi}(A) =\gamma_{\xi}^{\mu}(A)\cdot \underbrace{
 \frac{\partial \tilde{A}_{\sigma}}{\partial A_{\xi}}  \frac{\partial A_{\lambda}}{\partial \tilde{A}_{\sigma}}
 }_{\delta^{\xi}_{\lambda}}
 \cdot \rho_{\nu}^{\lambda}(A)  =\Big(\big[\gamma_{\mathbf{u}}\big]^{\mu}_{\sigma} (\tilde{A}) \,  \big[\rho_{\mathbf{u}}\big]_{\nu}^{\sigma}(\tilde{A})  \Big)\Big|_{\tilde{A} = \tilde{A}(A)} . \la{secsteinterme}
\ee 
Therefore,
\be
 \det{\big[\gamma(A) \rho(A)\big]} =  \Big(\det{\big[\gamma_{\mathbf{u}} (\tilde{A}) \, \rho_{\mathbf{u}}(\tilde{A})\big]}\Big)\Big|_{\tilde{A} = \tilde{A}(A)} 
 = \exp{\left( {\mathcal{C}_{\mu}^{\sigma\mu}}\tilde{A}_{\sigma}(A(x))\right)}, \la{interme2}
\ee
where we used the identity~\eqref{interme} at the last step. Substituting this formula into the definition~\eqref{mudef}, we immediately obtain Eq.~\eqref{musec}.
\end{proof}

\noindent Now we discuss the transformation properties of $M_{A}(x)$.
\begin{proposition}\label{prop2}
Upon the deformed gauge transformations, the expression~\eqref{mudef} transforms as follows:
\be
\delta_fM_{A} =  M_{A} \, \mathcal{C}_{\nu}^{\nu\sigma} \,\partial_{\sigma}f  + \{ M_{A} , f\}. \la{prop2rel}
\ee
\end{proposition}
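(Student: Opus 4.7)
My plan is to bypass the determinant form $M_A = (\det[\gamma\rho])^{-1}$ and instead work directly with the explicit exponential expressions of Proposition~\ref{prop1}, since differentiating an exponential is much easier than differentiating a determinant. Writing $\tau^\sigma := \mathcal{C}^{\mu\sigma}_\mu$ for brevity, one has $M_A = \exp(\tau^\sigma A_\sigma)$ in the universal realization and $M_A = \exp(\tau^\sigma \tilde A_\sigma(A(x)))$ in the universal-equivalent realization.

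For the universal case, I would apply the chain rule together with the deformed transformation law in \eqref{techel}, which gives
\be
\delta_f M_A = M_A\,\tau^\sigma\bigl(\gamma^\xi_\sigma(A)\, \partial_\xi f + \{A_\sigma, f\}\bigr).
\ee
The term containing $\{A_\sigma, f\}$ reassembles as $\{M_A, f\}$ by the Leibniz property of the Poisson bracket applied to $M_A = \exp(\tau^\sigma A_\sigma)$. The remaining term should equal $M_A\,\tau^\xi \partial_\xi f$, so the whole proof reduces to the purely algebraic identity $\tau^\sigma \gamma^\xi_\sigma = \tau^\xi$.

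To establish this, I would expand $\gamma_{\mathbf{u}} = G(\hat A)$ as a power series about the origin, using $G(0)=1$. It then suffices to show that $\tau^\sigma (\hat A^n)^\xi_\sigma = 0$ for every $n \geq 1$, and this reduces inductively to the single relation $\tau^\sigma \hat A^\xi_\sigma = A_\alpha\, \tau^\sigma \mathcal{C}^{\alpha\xi}_\sigma = 0$. The required identity $\tau^\sigma \mathcal{C}^{\alpha\xi}_\sigma = 0$ is a doubly-contracted form of the Jacobi identity for $\mathcal{C}$, equivalent to the character property $\mathrm{tr}\,\mathrm{ad}[X,Y] = \mathrm{tr}[\mathrm{ad}X,\mathrm{ad}Y] = 0$ of the modular one-form. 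This is the only step where the full Jacobi identity (rather than the mere antisymmetry of $\mathcal{C}$) enters, and I expect it to be the main conceptual obstacle; it also explains why the argument depends on genuinely Lie-algebra-type noncommutativity.

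For the universal-equivalent realization, the clean route is to combine \eqref{newGR} with the chain rule and observe that $\tilde A_\sigma$, viewed as a function of $A(x)$, obeys $\delta_f \tilde A_\sigma = [\gamma_{\mathbf{u}}]^\xi_\sigma(\tilde A)\,\partial_\xi f + \{\tilde A_\sigma, f\}$: the two Jacobians in the $\partial f$ term cancel, and the Poisson-bracket term follows from Leibniz. Hence the previous argument applies verbatim with $A \mapsto \tilde A$ and $\gamma \mapsto \gamma_{\mathbf{u}}$, yielding \eqref{prop2rel}.
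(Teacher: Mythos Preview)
Your proposal is correct and follows essentially the same route as the paper: use the explicit exponential formula from Proposition~\ref{prop1}, apply the chain rule with the deformed transformation $\delta_f A$, reassemble the Poisson-bracket piece via Leibniz, and reduce the remaining term to the identity $\mathcal{C}_{\nu}^{\nu\xi}[\gamma_{\mathbf{u}}]^{\sigma}_{\xi}=\mathcal{C}_{\nu}^{\nu\sigma}$, which is established by Taylor-expanding $G(\hat A)$ and killing the positive powers of $\hat A$ with the contracted Jacobi identity. Your treatment of the universal-equivalent case is phrased slightly more conceptually (first showing that $\tilde A$ itself obeys the ``universal'' transformation law, then recycling the argument), whereas the paper carries out the direct computation, but the underlying manipulations are identical.
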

\begin{proof}
First, we prove the proposition for the universal realization of Lie-Poisson electrodynamics.
The explicit formulae~\eqref{mufirs} for $M_A$ and~\eqref{techel} for $\delta_f A$ yield:
\bea
\delta_{f} M_{A} 
 =\frac{\partial {M_{A}}}{\partial A_{\xi}} \,\,\delta_{f}A_{\xi} =  M_{A} \, \mathcal{C}_{\nu}^{\nu\xi} \,\big[\gamma_{\mathbf{u}}\big]_{\xi}^{\sigma}\,\partial_{\sigma}f  + \{ M_{A} , f\}.
\eea
To complete the proof, it is sufficient to demonstrate that
\be
 \mathcal{C}_{\nu}^{\nu\xi} \,\big[\gamma_{\mathbf{u}}\big]_{\xi}^{\sigma}
 =    \mathcal{C}_{\nu}^{\nu\sigma} . \la{todemost}
\ee
By contracting the Jacobi identity for the structure constants,
\be
 \mathcal{C}_{\alpha}^{\nu\sigma}\,\mathcal{C}^{\beta\xi}_{\sigma} + 
 \mathcal{C}_{\alpha}^{\beta\sigma}\,\mathcal{C}^{\xi\nu}_{\sigma} +  
 \mathcal{C}_{\alpha}^{\xi\sigma}\,\mathcal{C}^{\nu\beta}_{\sigma} = 0
\ee
over the indices $\nu$ and $\alpha$, we obtain
\be
 \mathcal{C}_{\nu}^{\nu\sigma}\,\mathcal{C}^{\beta\xi}_{\sigma} = 0, 
\ee
and therefore
\be
 \mathcal{C}_{\nu}^{\nu\sigma}\,\big[\hat{A}^k\big]^{\xi}_{\sigma} = 0, \qquad \forall k=1,2,... .  \la{interme3}
\ee
 {By substituting }
\be
\big[\gamma_{\mathbf{u}}\big]^{\xi}_{\sigma}(A) = \delta^{\xi}_{\sigma}+ \sum_{k=1}^{\infty}\frac{\big[\hat{A}^k\big]^{\xi}_{\sigma} B_k^{+}}{k!}, \la{gammaexp}
\ee
into the left-hand side of~\eqref{todemost}, we see that, thanks to the relation~\eqref{interme3}, the contributions of all nonzero powers of $\hat{A}$ vanish, while the Kronecker symbol gives the desired right-hand side of~\eqref{todemost}, what completes our proof for the “universal” realization.

For the universal-equivalent realization of Lie-Poisson electrodynamics, the relation~\eqref{prop2rel} can also be easily proven. Indeed, the explicit expression~\eqref{musec} for $M_{A}$ yields:
\bea
\delta_{f} M_{A}  &=&
 M_{A} \, \mathcal{C}_{\nu}^{\nu\sigma} \, \frac{\partial \tilde{A_\sigma} }{\partial  A_{\lambda}}\, \gamma_{\lambda}^{\xi}(A)\,\partial_{\xi} f  + \{ M_{A} , f\} \nonumber\\
  &=&  M_{A}\,\mathcal{C}_{\nu}^{\nu\sigma} \, \left[\gamma_{\mathbf{u}}\right]_{\sigma}^{\xi}(\tilde{A}(A))\,\partial_{\xi} f  + \{ M_{A} , f\}  = 
  M_{A}\, \mathcal{C}_{\nu}^{\nu\xi} \,\partial_{\xi} f + \{ M_{A} , f\} ,
\eea
where we used the identity~\eqref{todemost} for $\gamma_{\mathbf{u}}$ at the last step.
\end{proof}

The key property of $M_{A}(x)$ is established in the following proposition.
\begin{proposition}\label{prop3}
For any  quantity $\mathcal{Q}(x)$, transforming in a gauge-covariant way
\be
\delta_{f}\mathcal{Q} = \{\mathcal{Q},f\}, \la{Qtra}
\ee
the  expression
\be
\check{\mathcal{Q}}(x) := M_{A}(x) \,\mathcal{Q}(x) \la{Qinv}
\ee
is gauge-invariant up to a total derivative:
\be
\delta_{f}\check{\mathcal{Q}}(x) = \partial_{\nu}\big( x^{\xi}\, \mathcal{C}_{\xi}^{\nu\sigma}\,\check{\mathcal{Q}}(x) \,\partial_{\sigma}f\big). \la{QinvTra}
\ee
\end{proposition}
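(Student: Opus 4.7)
The strategy is a direct computation combining the Leibniz rule for $\delta_f$, the transformation law for $M_A$ established in Proposition~\ref{prop2}, and the covariance assumption on $\mathcal{Q}$; the final step is to rearrange the result as a total divergence using the explicit form of the Poisson bracket~\eqref{Pbre} and the antisymmetry of $\mathcal{C}^{\mu\nu}_\lambda$ in its upper indices.

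First, since $\delta_f$ acts as a derivation, I would write
\be
\delta_f \check{\mathcal{Q}} = (\delta_f M_A)\,\mathcal{Q} + M_A\,(\delta_f \mathcal{Q}).
\ee
Inserting Proposition~\ref{prop2} for $\delta_f M_A$ and the covariance~\eqref{Qtra} for $\mathcal{Q}$ gives
\be
\delta_f \check{\mathcal{Q}} = M_A\,\mathcal{C}^{\nu\sigma}_{\nu}\,(\partial_\sigma f)\,\mathcal{Q} + \{M_A,f\}\,\mathcal{Q} + M_A\,\{\mathcal{Q},f\}.
\ee
The last two terms recombine via the Leibniz rule for the Poisson bracket into $\{M_A \mathcal{Q},f\} = \{\check{\mathcal{Q}},f\}$, so that
\be
\delta_f \check{\mathcal{Q}} = \mathcal{C}^{\nu\sigma}_{\nu}\,\check{\mathcal{Q}}\,\partial_\sigma f + \{\check{\mathcal{Q}},f\}. \la{planintermediate}
\ee

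It then remains to identify~\eqref{planintermediate} with the divergence on the right-hand side of~\eqref{QinvTra}. I would expand
\be
\partial_\nu\bigl(x^{\xi}\,\mathcal{C}^{\nu\sigma}_{\xi}\,\check{\mathcal{Q}}\,\partial_\sigma f\bigr) = \mathcal{C}^{\nu\sigma}_{\nu}\,\check{\mathcal{Q}}\,\partial_\sigma f + x^{\xi}\,\mathcal{C}^{\nu\sigma}_{\xi}\,(\partial_\nu \check{\mathcal{Q}})\,\partial_\sigma f + x^{\xi}\,\mathcal{C}^{\nu\sigma}_{\xi}\,\check{\mathcal{Q}}\,\partial_\nu \partial_\sigma f.
\ee
The last term vanishes because $\mathcal{C}^{\nu\sigma}_{\xi}$ is antisymmetric in $(\nu,\sigma)$ while $\partial_\nu\partial_\sigma f$ is symmetric; the middle term is precisely $\{\check{\mathcal{Q}},f\}$ by the definition~\eqref{Pbre} of the Poisson bracket (after a trivial relabeling of dummy indices). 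Comparing with~\eqref{planintermediate} yields the claim.

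The computation is essentially mechanical; the only potential subtlety is the index bookkeeping in the last step, where one must verify that the anomalous inhomogeneous term $\mathcal{C}^{\nu\sigma}_{\nu}\,\check{\mathcal{Q}}\,\partial_\sigma f$ from Proposition~\ref{prop2} is exactly what is needed to complete the $x$-weighted Poisson bracket into a genuine divergence. No deeper input beyond Proposition~\ref{prop2}, the derivation property of $\delta_f$, and the antisymmetry of the structure constants is required.
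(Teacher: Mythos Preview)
Your proof is correct and follows essentially the same route as the paper: both arrive at the intermediate identity $\delta_f\check{\mathcal{Q}} = \mathcal{C}^{\nu\sigma}_{\nu}\,\check{\mathcal{Q}}\,\partial_\sigma f + \{\check{\mathcal{Q}},f\}$ via Proposition~\ref{prop2} and the Leibniz rule, and then match it to the total divergence using~\eqref{Pbre} and the antisymmetry of $\mathcal{C}^{\nu\sigma}_{\xi}$. The only cosmetic difference is direction: the paper rewrites $\{\check{\mathcal{Q}},f\}$ as the divergence minus the inhomogeneous term, whereas you expand the divergence and identify its pieces; the content is identical.
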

\begin{proof}
According to the transformation law~\eqref{Qtra} and Proposition~\ref{prop2}, upon an infinitesimal gauge transformation the quantity~\eqref{Qinv} transforms as follows:
\bea
\delta_f\check{\mathcal{Q}}(x) &=&\mathcal{Q}(x)\, \delta_f  M_{A}(x) +  M_{A}(x) \, \delta_f\mathcal{Q}(x) \nonumber\\
&=& \check{\mathcal{Q}}(x)\, \mathcal{C}_{\nu}^{\nu\sigma} \,\partial_{\sigma}f +\{\check{\mathcal{Q}}(x) ,f\}. \la{intermef1}
\eea
Using the explicit expression~\eqref{Pbre} for the Poisson bracket, we can rewrite the second term of the last line as:
\bea
\{\check{\mathcal{Q}}(x) ,f\} &=& x^{\xi}\, \mathcal{C}_{\xi}^{\nu\sigma}\,\partial_{\nu}\check{\mathcal{Q}}(x) \,\partial_{\sigma}f
\nonumber\\
 &=& -  \check{\mathcal{Q}}(x) \, \mathcal{C}_{\nu}^{\nu\sigma} \,\partial_{\sigma}f +\partial_{\nu}\big( x^{\xi}\, \mathcal{C}_{\xi}^{\nu\sigma}\,
\check{\mathcal{Q}}(x) \,\partial_{\sigma}f\big),  \la{intermef2}
\eea
what immediately implies~\eqref{QinvTra}. 
\end{proof}

\subsection*{b. Gauge-invariant action}
In what follows, we shall use the flat Minkowski metric
\be
\eta = \mathrm{diag}\,(\,+1,-1,-1,-1),
\ee
to raise and lower the indices; for instance,
\be
\mathcal{F}^{\mu\nu}(x) = \eta^{\mu\alpha}\eta^{\nu\beta}\,\mathcal{F}_{\alpha\beta}(x).
\ee
Now we are ready to prove our main result.
\begin{proposition}\label{prop4}
The classical action
\be
S_{\mathbf{g}}[A] = \int_{\mathcal{M}} \, \dd x\, \check{\mathcal{L}}(x) \la{invaction}
\ee
with the Lagrangian density
\be
\check{\mathcal{L}}(x) = M_{A}(x) \Big(-\frac{1}{4} \,\mathcal{F}_{\mu\nu}(x)\,\mathcal{F}^{\mu\nu}(x)\Big),  \la{Lmin}
\ee
\begin{itemize}
\item{
is gauge-invariant:
\be
\delta_{f} S_{\mathbf{g}}[A] = 0,
\ee
}
\item{
and has the correct commutative limit:
\be
\lim_{\mathcal{C}\to 0 }S_{\mathbf{g}}[A] =  \int_{\mathcal{M}} \, \dd x\,\Big( -\frac{1}{4}\,F_{\mu\nu}(x)\, F^{\mu\nu}(x)\Big) . \la{Scorrcomlim}
\ee
}
\end{itemize}

\end{proposition}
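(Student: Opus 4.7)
The strategy is to apply Proposition~\ref{prop3} with $\mathcal{Q}(x) = -\tfrac{1}{4}\mathcal{F}_{\mu\nu}(x)\mathcal{F}^{\mu\nu}(x)$, so that $\check{\mathcal{L}} = M_A\mathcal{Q}$ is precisely the $\check{\mathcal{Q}}$ of that proposition. The only non-automatic input is verifying that this scalar $\mathcal{Q}$ is itself gauge-covariant in the sense of Eq.~\eqref{Qtra}; once that is in hand, Proposition~\ref{prop3} immediately gives $\delta_f\check{\mathcal{L}}$ as a total derivative, and gauge invariance of $S_{\mathbf{g}}$ follows by the usual boundary-condition argument. The commutative limit is then a short computation using Eqs.~\eqref{muAcomlim} and~\eqref{goodcomlims}.

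First, I would establish the covariance of $\mathcal{Q}$. The Poisson bracket~\eqref{Pbre} is a derivation in each entry, hence satisfies the Leibniz rule. Using the middle equation of~\eqref{trarul}, which states that $\mathcal{F}_{\mu\nu}$ transforms via the Poisson bracket, together with the fact that the constant Minkowski metric $\eta^{\mu\alpha}\eta^{\nu\beta}$ commutes with everything and is annihilated by $\{\,\cdot\,,f\}$, I would write
\begin{equation}
\delta_f \mathcal{Q} = -\frac{1}{4}\Big(\{\mathcal{F}_{\mu\nu},f\}\mathcal{F}^{\mu\nu} + \mathcal{F}_{\mu\nu}\{\mathcal{F}^{\mu\nu},f\}\Big) = \Big\{-\frac{1}{4}\mathcal{F}_{\mu\nu}\mathcal{F}^{\mu\nu},f\Big\} = \{\mathcal{Q},f\}.
\end{equation}
This is exactly the hypothesis~\eqref{Qtra} needed for Proposition~\ref{prop3}. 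Applying that proposition yields
\begin{equation}
\delta_f \check{\mathcal{L}}(x) = \partial_\nu\big(x^\xi\,\mathcal{C}_\xi^{\nu\sigma}\,\check{\mathcal{L}}(x)\,\partial_\sigma f\big),
\end{equation}
which is manifestly a total divergence. Integrating over $\mathcal{M}\simeq\mathbb{R}^d$ and assuming the standard fall-off conditions on the gauge parameter $f$ and on $A_\mu$ (so that the boundary flux vanishes), one concludes $\delta_f S_{\mathbf{g}}[A]=0$.

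For the commutative limit, the integrand factorizes as $M_A\cdot(-\tfrac{1}{4}\mathcal{F}_{\mu\nu}\mathcal{F}^{\mu\nu})$, and the two factors have independent and uniform $\mathcal{C}\to 0$ limits: Eq.~\eqref{muAcomlim} gives $M_A\to 1$, while the second limit in~\eqref{goodcomlims} yields $\mathcal{F}_{\mu\nu}\to F_{\mu\nu}$ and hence $\mathcal{F}^{\mu\nu}\to F^{\mu\nu}$. Passing the limit under the integral (again assuming the customary regularity) gives Eq.~\eqref{Scorrcomlim}.

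The only step that requires genuine care, rather than bookkeeping, is the covariance of $\mathcal{Q}$. It is not entirely trivial a priori because raising indices with $\eta^{\mu\nu}$ only produces a Poisson-bracket-covariant object when the metric components are constants with respect to the Poisson bracket; happily this is the case for the flat Minkowski metric used here, so the Leibniz argument goes through cleanly. Everything else---the appearance of total-derivative gauge variations, and the commutative-limit calculation---is controlled directly by Propositions~\ref{prop1}--\ref{prop3} and by the limits~\eqref{grcomlim}--\eqref{goodcomlims} already established.
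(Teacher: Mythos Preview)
Your proof is correct and follows essentially the same route as the paper: you verify via the Leibniz rule and~\eqref{trarul} that $\mathcal{Q}=-\tfrac14\mathcal{F}_{\mu\nu}\mathcal{F}^{\mu\nu}$ is gauge-covariant, invoke Proposition~\ref{prop3} to conclude that $\delta_f\check{\mathcal{L}}$ is a total divergence, and obtain the commutative limit from~\eqref{muAcomlim} and~\eqref{goodcomlims}. The only cosmetic difference is that the paper states the Leibniz rule explicitly and attributes the vanishing boundary term to fall-off of $A$ rather than of $f$.
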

\begin{proof}
The correct commutative limit~\eqref{Scorrcomlim} is an obvious consequence of the commutative limits~\eqref{muAcomlim} for $M_A$ and~\eqref{goodcomlims} for $\mathcal{F}$; thus, from now on, we shall focus on gauge invariance.

The transformation law~\eqref{trarul} for $\mathcal{F}$, along with Leibniz’s rule for the Poisson bracket,
\be
\{f\,g,q\} = \{f,q\}\, g +   f\,\{g,q\}, \qquad \forall f,g,q\in\mathcal{C}^{\infty}(\mathcal{M})
\ee
implies that the expression
\be
\mathcal{L}(x) := -\frac{1}{4} \,\mathcal{F}_{\mu\nu}(x)\,\mathcal{F}^{\mu\nu}(x)
\ee
 transforms in a gauge-covariant way:
\be
\delta_{f}\mathcal{L} = \{\mathcal{L},f\}. \la{Ltra}
\ee
Since
\be
\check{\mathcal{L}}(x)= M_A(x)\,\mathcal{L}(x),
\ee
Proposition~\ref{prop3} states that
\be
\delta_{f}\check{\mathcal{L}}(x) = \partial_{\nu}\big( x^{\xi}\, \mathcal{C}_{\xi}^{\nu\sigma}\,\check{\mathcal{L}}(x)\,\partial_{\sigma}f\big)
\ee
Being an integral of a total derivative, the corresponding variation of the action vanishes, provided the gauge field $A(x)$ decays sufficiently fast at infinity:
\be
\delta_f S_{\mathbf{g}}[A] 
 = \int_{\mathcal{M}}\dd x\, \partial_{\nu}\big( x^{\xi}\, \mathcal{C}_{\xi}^{\nu\sigma}\,\check{\mathcal{L}}(x) \,\partial_{\sigma}f\big) =0.
\ee
\end{proof}

The gauge-invariance of the action~\eqref{invaction} implies that the corresponding Euler-Lagrange equations  \\
\be
\mathcal{E}_{EL}^{\mu}(x) = 0, \qquad \mathcal{E}_{EL}^{\mu}(x):= \frac{\delta S_{\mathbf{g}}[A]}{\delta A_{\mu}(x)}, \la{dMe}
\ee
are not independent but obey the Noether identity, established in the following proposition.
\begin{proposition} \la{prop5}
The left-hand sides $\mathcal{E}_{EL}^{\mu}(x)$ of the field equations~\eqref{dMe} obey the relation
\be
\partial_{\nu}\big(\gamma_{\mu}^{\nu}(A)\,\mathcal{E}_{EL}^{\mu}(x)\big) 
+ \{A_{\mu},\mathcal{E}_{EL}^{\mu}(x) \} + \mathcal{C}_{\nu}^{\xi\nu}\, \partial_{\xi}A_{\mu}\,\mathcal{E}_{EL}^{\mu}(x) = 0.  \la{gind}
\ee
\end{proposition}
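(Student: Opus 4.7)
The natural plan is to extract the Noether identity from the gauge invariance of the action proven in Proposition~\ref{prop4}. Since $\delta_f S_{\mathbf{g}}[A] = 0$ for arbitrary gauge parameter $f$, the chain rule gives
\be
0 = \delta_f S_{\mathbf{g}}[A] = \int_{\mathcal{M}} \dd x\, \mathcal{E}_{EL}^{\mu}(x)\,\delta_f A_{\mu}(x),
\ee
and my goal is to massage this relation into a statement of the form $\int \dd x\, \mathcal{K}(x)\, f(x) = 0$ with $\mathcal{K}$ independent of $f$, so that arbitrariness of $f$ (with compact support, say) gives $\mathcal{K} \equiv 0$, which will be the identity~\eqref{gind}.

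First I would substitute the explicit expression for $\delta_f A_{\mu}$ from~\eqref{techel}, splitting the integral into two pieces corresponding to $\gamma_{\mu}^{\xi}(A)\,\partial_{\xi}f$ and $\{A_{\mu},f\} = x^{\lambda}\mathcal{C}_{\lambda}^{\alpha\beta}\partial_{\alpha}A_{\mu}\,\partial_{\beta}f$. Integration by parts on the first piece directly yields $-\int \dd x\, \partial_{\nu}(\gamma_{\mu}^{\nu}\,\mathcal{E}_{EL}^{\mu})\, f$, already reproducing the first term of~\eqref{gind}. For the second piece, integrating by parts on $\partial_{\beta}f$ produces three contributions: one where $\partial_{\beta}$ hits $\partial_{\alpha}A_{\mu}$, one where it hits $\mathcal{E}_{EL}^{\mu}$, and one where it hits the explicit coordinate $x^{\lambda}$.

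The key structural observation is that the first of those three contributions, namely $x^{\lambda}\mathcal{C}_{\lambda}^{\alpha\beta}\partial_{\beta}\partial_{\alpha}A_{\mu}\,\mathcal{E}_{EL}^{\mu}$, vanishes identically by the antisymmetry of $\mathcal{C}_{\lambda}^{\alpha\beta}$ in the upper indices (symmetric times antisymmetric). The second contribution, $-x^{\lambda}\mathcal{C}_{\lambda}^{\alpha\beta}\partial_{\alpha}A_{\mu}\,\partial_{\beta}\mathcal{E}_{EL}^{\mu}$, is precisely $-\{A_{\mu},\mathcal{E}_{EL}^{\mu}\}$ by~\eqref{Pbre}. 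The third contribution, from $\partial_{\beta}x^{\lambda} = \delta_{\beta}^{\lambda}$, yields $-\mathcal{C}_{\beta}^{\alpha\beta}\partial_{\alpha}A_{\mu}\,\mathcal{E}_{EL}^{\mu}$, which matches the last term of~\eqref{gind} after relabeling. Collecting everything and using arbitrariness of $f$ gives exactly~\eqref{gind}.

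I do not anticipate a serious obstacle: the only thing to be careful about is keeping the index placement consistent with the convention of the paper (upper indices on rows, lower on columns, and the structure-constant antisymmetry $\mathcal{C}_{\lambda}^{\alpha\beta} = -\mathcal{C}_{\lambda}^{\beta\alpha}$) so that the double-derivative term on $A_{\mu}$ indeed cancels and the signs of the Poisson-bracket term and the anomalous $\mathcal{C}_{\nu}^{\xi\nu}$ term come out as written. Suitable boundary behaviour of $A$ and $f$ at infinity is assumed throughout, exactly as in the proof of gauge invariance of $S_{\mathbf{g}}[A]$.
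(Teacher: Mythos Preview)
Your proposal is correct and follows essentially the same route as the paper: start from $0=\delta_f S_{\mathbf{g}}=\int\dd x\,\mathcal{E}_{EL}^{\mu}\,\delta_f A_{\mu}$, insert the explicit form of $\delta_f A_{\mu}$, and integrate by parts to isolate $f$. The only difference is cosmetic: the paper packages $\delta_f A_{\mu}$ as $\big(\gamma_{\mu}^{\nu}(A)+x^{\sigma}\mathcal{C}_{\sigma}^{\xi\nu}\partial_{\xi}A_{\mu}\big)\partial_{\nu}f$ and integrates by parts in one stroke, whereas you spell out the three contributions from the Poisson-bracket piece and note explicitly that the symmetric/antisymmetric cancellation kills the $\partial_{\beta}\partial_{\alpha}A_{\mu}$ term.
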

\begin{proof}
Presenting the gauge transformation $\delta_{f}A$ in the form
\be
\delta_{f} {A_{\mu}} = \big(\gamma_{\mu}^{\nu}(A) + x^{\sigma}\,\mathcal{C}_{\sigma}^{\xi\nu} \partial_{\xi}A_{\mu}\big) \,\partial_{\nu}f, 
\ee 
we see that the gauge invariance of $S_{\mathrm{g}}$ yields
\bea 
0&=&\delta_{f} S_{\mathrm{g}}[A] = \int_{\mathcal{M}}\dd x\,\, \frac{\delta S_{\mathbf{g}}[A]}{\delta A_{\mu}(x)} \, \,\delta_{f}A_{\mu}(x) \nonumber\\
 &=& \int_{\mathcal{M}}\dd x\,\mathcal{E}_{EL}^{\mu}(x) \,\big(\gamma_{\mu}^{\nu}(A) + x^{\sigma}\,\mathcal{C}_{\sigma}^{\xi\nu} \partial_{\xi}A_{\mu}\big)\,\partial_{\nu}f \nonumber\\
 &=& -  \int_{\mathcal{M}}\dd x \,\big[\partial_{\nu}\big(\gamma_{\mu}^{\nu}(A)\,\mathcal{E}_{ {EL}}^{\mu}(x)\big) 
+ \{A_{\mu},\mathcal{E}_{EL}^{\mu}(x) \} + \mathcal{C}_{\nu}^{\xi\nu}\, \partial_{\xi}A_{\mu}\,\mathcal{E}_{EL}^{\mu}(x)\big]\, f, \la{ipropo}
\eea
where we integrated by parts at the last step.  Since~\eqref{ipropo} is valid for any gauge parameter $f(x)$, the expression in the square brackets must vanish identically, what implies Eq.~\eqref{gind}.
\end{proof}

Before we illustrate our findings with the four-dimensional $\kappa$-Minkowski example, we would like to make a few general remarks. \\

\noindent {\bf Remark~1.} Eq.~\eqref{Lmin} provides the minimal choice of the deformed Lagrangian density.  By contracting the structure constants $\mathcal{C}^{\mu\nu}_{\alpha}$ with the deformed field strength $\mathcal{F}_{\mu\nu}$ and the gauge-covariant derivative $\mathcal{D}_{\mu}$ we obtain new gauge-covariant expressions. Moreover, the Poisson bracket of two gauge-covariant quantities is again a gauge-covariant quantity. 
Multiplying these gauge-covariant combinations by the integrating factor $M_A$, we can easily construct many other admissible, though non-minimal, Lagrangian densities. \\

\noindent {\bf Remark~2. }For an unimodular Lie algebra $\mathfrak{g}$, the relation~\eqref{compcond}, together with Proposition~\ref{prop1}, implies that
\be
M_{A}(x) = 1,
\ee
and the expression~\eqref{invaction} reduces to the “admissible” gauge-invariant action previously proposed in~\cite{our2023}. \\

\noindent {\bf Remark~3. }For any Lie-algebra-type noncommutativity, a local first-order action $S_{\mathbf{particle}}$, which describes the motion of a point-like particle in a given gauge background $A$, was constructed in~\cite{Basilio:2024bir}. This action is invariant under the gauge transformations $\delta_f A$ of the background field $A$, accompanied by transformations of the phase-space variables of the charged particle, which close the gauge algebra~\eqref{pga}.

Now consider $N$ charged particles interacting with the gauge field. Combining the results of the present paper with those of~\cite{Basilio:2024bir}, we obtain a total gauge-invariant action describing the dynamics of this system:
 \be
S_{\mathbf{total}} = S_{\mathbf{g}} + \sum^{N}_{i=1}S^{(i)}_{\mathbf{particle}}, \la{fieldluspart}
\ee
with $S^{(i)}_{\mathbf{particle}}$ being the action of the $i$-th particle.  {Eq.~\eqref{fieldluspart} provides a deformation of the usual action describing a system of charged point-like particles and the electromagnetic field, interacting with them, see e.g. Eq. (4.11) of~\cite{Lechner}. Note that in the Lie-Poisson approach it is more convenient to describe the particle degrees of freedom in terms of the phase-space variables.

The first-order action $S^{(i)}_{\mathbf{particle}}$ depends on the trajectory of the $i$-th particle $\big(x_{(i)}(\tau), p_{(i)}(\tau)\big)$ in phase space, with $\tau$ being a parameter\footnote{ {An additional `gauge' freedom of reparametrisations of particle's trajectory, the corresponding Lagrange multipliers and so on are discussed in~\cite{Basilio:2024bir}.} } along this trajectory, 
and on the gauge background $A(x_{(i)}(\tau))$ evaluated at the particle's position in configuration space.  By varying the action~\eqref{fieldluspart} with respect to the particle variables $x_{(i)}(\tau)$ and $p_{(i)}(\tau)$, one obtains Hamiltonian equations of motion describing the particle dynamics in a given gauge background~\cite{Basilio:2024bir}.
On the other hand, by varying~\eqref{fieldluspart} with respect to $A(x)$, one obtains the deformed field equations for given particle trajectories  
in phase space. In particular, the variational derivative of $S^{(i)}_{\mathbf{particle}}$ over $A(x)$ yields a corresponding contribution to the current density. By solving the deformed Maxwell equations in the presence of this current density, one may analyze, e.g., electromagnetic radiation produced by moving particles in Lie-Poisson electrodynamics.  This analysis goes beyond the scope of the present paper, and below we shall focus on the Maxwell equations in the absence of charged matter.}

 {
\subsection*{c. Deformed Maxwell equations.}
The following proposition provides an explicit form of the deformed Maxwell equations.
}
\begin{proposition} \la{propMaxwell}
The Euler-Lagrange equations~\eqref{dMe} admit the manifestly gauge-covariant form\footnote{That is, $\delta_{f} \mathcal{E}^{\mu}_{G}  = \{\mathcal{E}^{\mu}_{G} ,f\}$.}:
\be
\mathcal{E}^{\mu}_{G} (x)= 0, \la{coveq}
\ee
with
\be
 \mathcal{E}_{G}^{\mu}(x) := \mathcal{D}_{\xi} \mathcal{F}^{\xi\mu} + \frac{1}{2}\, \mathcal{F}_{\lambda\omega}\,\mathcal{C}_{\nu}^{\lambda\omega}\, \mathcal{F}^{\mu\nu} 
 -\mathcal{F}_{\lambda \omega} \, \mathcal{C}_{\nu}^{\mu \omega}\,\mathcal{F}^{\lambda \nu} 
  - \frac{1}{4}\,\big( \mathcal{C}_{\nu}^{\nu \mu}\, \mathcal{F}_{\lambda\omega}\,\mathcal{F}^{\lambda\omega} 
+ 4\,\mathcal{C}_{\nu}^{\nu \lambda}\,\mathcal{F}_{\lambda \omega}\,\mathcal{F}^{\omega \mu}\big).
\ee
\end{proposition}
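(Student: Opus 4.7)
The plan is to compute $\mathcal{E}_{EL}^{\mu}(x) = \delta S_{\mathbf{g}}/\delta A_{\mu}(x)$ directly from~\eqref{kappaAction} and to exhibit the result as $M_{A}(x)\,\mathcal{E}_{G}^{\mu}(x)$; since $M_{A} = e^{-3A_{0}/\kappa}$ is nowhere vanishing, this establishes that $\mathcal{E}_{EL}^{\mu}=0$ and $\mathcal{E}_{G}^{\mu}=0$ are equivalent systems of field equations.

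The variation of $\check{\mathcal{L}} = M_{A}\mathcal{L}$ splits as $\delta\check{\mathcal{L}} = (\delta M_{A})\mathcal{L} + M_{A}\,\delta\mathcal{L}$. The first piece is immediate: $\partial M_{A}/\partial A_{\mu} = -3\kappa^{-1}\delta^{\mu}_{0} M_{A} = \mathcal{C}^{\sigma\mu}_{\sigma} M_{A}$, so this contribution is precisely the $-\tfrac{1}{4}\mathcal{C}^{\nu\mu}_{\nu}\mathcal{F}_{\lambda\omega}\mathcal{F}^{\lambda\omega}$ term inside $\mathcal{E}_{G}^{\mu}$, dressed by $M_{A}$. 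The substantive work is the second piece, $-\tfrac{1}{2}M_{A}\mathcal{F}^{\alpha\beta}\,\delta\mathcal{F}_{\alpha\beta}$, which requires the functional derivative of $\mathcal{F}_{\alpha\beta}$ defined in~\eqref{techel} with respect to $A_{\mu}$.

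To organize $\delta\mathcal{F}_{\alpha\beta}/\delta A_{\mu}$ I would exploit the explicit form~\eqref{newRG}: $\gamma$ depends only on $A_{1},A_{2},A_{3}$, while $\rho$ depends only on $A_{0}$, and the Poisson bracket $\{A_{\xi},A_{\lambda}\}$ reduces to a simple combination linear in $A$ via the $\kappa$-Minkowski structure constants. The chain-rule contributions then naturally split into four groups: variations of the two $\rho$ factors, of the $\gamma$ matrices inside the bracket, and of the explicit $A$ appearing in $\partial A$ and in $\{A,A\}$. After integrating by parts, the derivative terms assemble into an expression of the form $\partial_{\xi}(\rho^{\xi}_{\lambda}\rho^{\mu}_{\omega}\mathcal{F}^{\lambda\omega})$, which upon application of the master equations~\eqref{master} and of the Leibniz rule becomes the covariant derivative $\mathcal{D}_{\xi}\mathcal{F}^{\xi\mu}$. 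The algebraic remainders from variations of $\rho$ and $\gamma$ are the candidates for the quadratic-in-$\mathcal{F}$ correction terms.

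The main obstacle is bookkeeping. Since $\mathcal{F}$ is quadratic in $\rho$ and carries $\gamma$ inside a bracket that is itself nonlinear in $A$, a naive expansion yields many tensorial terms that must be regrouped into the three specific contractions $\mathcal{C}^{\lambda\omega}_{\nu}\mathcal{F}_{\lambda\omega}\mathcal{F}^{\mu\nu}$, $\mathcal{C}^{\mu\omega}_{\nu}\mathcal{F}_{\lambda\omega}\mathcal{F}^{\lambda\nu}$, and $\mathcal{C}^{\nu\lambda}_{\nu}\mathcal{F}_{\lambda\omega}\mathcal{F}^{\omega\mu}$ appearing in $\mathcal{E}_{G}^{\mu}$. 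A useful intermediate device is the identity $\mathcal{C}^{\nu\sigma}_{\nu}[\gamma_{\mathbf{u}}]^{\xi}_{\sigma} = \mathcal{C}^{\nu\xi}_{\nu}$ from the proof of Proposition~\ref{prop2}, which controls terms where the trace part of $\mathcal{C}$ meets a $\gamma$-factor. A strong a posteriori consistency check comes from Proposition~\ref{prop2} and the Noether identity~\eqref{gind}: once the factorization $\mathcal{E}_{EL}^{\mu} = M_{A}\,\mathcal{E}_{G}^{\mu}$ is established, these two results together force $\delta_{f}\mathcal{E}_{G}^{\mu} = \{\mathcal{E}_{G}^{\mu},f\}$, reproducing the manifest gauge-covariance stated in the footnote and tying the present proposition back to the general structural results of Sec.~\ref{action}.
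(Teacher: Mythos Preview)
Your overall strategy --- compute $\mathcal{E}_{EL}^{\mu}$ directly from the action~\eqref{kappaAction} and then relate it algebraically to $\mathcal{E}_{G}^{\mu}$ --- is exactly the one the paper uses. However, the factorization you aim for is wrong. You plan to show
\[
\mathcal{E}_{EL}^{\mu}(x) \;=\; M_{A}(x)\,\mathcal{E}_{G}^{\mu}(x),
\]
but the actual relation, as the paper's own proof states and as the computation will reveal, carries an additional $\rho$-factor:
\[
\mathcal{E}_{G}^{\mu}(x) \;=\; M_{A}^{-1}(x)\,\big[\rho^{-1}(A)\big]^{\mu}_{\nu}\,\mathcal{E}_{EL}^{\nu}(x),
\qquad\text{i.e.}\qquad
\mathcal{E}_{EL}^{\nu}(x) \;=\; M_{A}(x)\,\rho^{\nu}_{\mu}(A)\,\mathcal{E}_{G}^{\mu}(x).
\]
For the $\kappa$-Minkowski $\rho$ of~\eqref{newRG} this is not cosmetic: $\rho^{0}_{0}=1$ but $\rho^{j}_{j}=e^{A_{0}/\kappa}$, so your claimed factorization would hold only for the temporal component and fails for the spatial ones. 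If you proceed along the lines you sketch, the algebraic remainders after integrating by parts will \emph{not} assemble into $M_{A}\,\mathcal{E}_{G}^{\mu}$; an extra $e^{A_{0}/\kappa}$ will be sitting on the spatial components, and you will not be able to close the argument. The fix is simply to target the correct relation from the outset; the rest of your bookkeeping plan (chain-rule on $\rho,\gamma$, integration by parts, use of~\eqref{master}) is reasonable.

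Separately, your ``a posteriori consistency check'' does not work as stated. The Noether identity~\eqref{gind} is a differential constraint on $\mathcal{E}_{EL}^{\mu}$, not a statement about how $\mathcal{E}_{EL}^{\mu}$ transforms under $\delta_{f}$; together with Proposition~\ref{prop2} it cannot by itself force $\delta_{f}\mathcal{E}_{G}^{\mu}=\{\mathcal{E}_{G}^{\mu},f\}$. That gauge covariance follows instead from the explicit form of $\mathcal{E}_{G}^{\mu}$ as a polynomial in the gauge-covariant objects $\mathcal{F}$ and $\mathcal{D}\mathcal{F}$ with constant coefficients $\mathcal{C}$, using~\eqref{trarul} and the Leibniz rule.
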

 {\begin{proof}
By calculating the left-hand sides of the Euler-Lagrange equations in the standard way, we obtain
\bea
\mathcal{E}_{EL}^{\omega} &=& \frac{\partial \check{\mathcal{L}}}{\partial A_{\omega}} - \frac{\partial }{\partial x^{\sigma}}\,\frac{\partial \check{\mathcal{L}}}{\partial (\partial_{\sigma}A_{\omega})}\nonumber\\
&=& M_A \, \bigg(\frac{\partial {\mathcal{L}}}{\partial A_{\omega}} - \frac{\partial }{\partial x^{\sigma}}\,\frac{\partial{\mathcal{L}}}{\partial (\partial_{\sigma}A_{\omega})} \bigg)
+ \frac{\partial M_A}{\partial A_{\omega}}\cdot\mathcal{L} - \frac{\partial M_A}{\partial x^{\sigma}}\frac{\partial \mathcal{L}}{\partial (\partial_{\sigma} A_{\omega})}. \la{ELderive1}
\eea
In Appendix~\ref{appA}, we prove the equalities
\bea
\frac{\partial {\mathcal{L}}}{\partial A_{\omega}} - \frac{\partial }{\partial x^{\sigma}}\,\frac{\partial{\mathcal{L}}}{\partial (\partial_{\sigma}A_{\omega})} &=&
\rho_{\nu}^{\omega}(A)\,\Big(\mathcal{D}_{\mu}\mathcal{F}^{\mu\nu}+\frac{1}{2}\,\mathcal{C}_{\mu}^{\sigma\varepsilon} \,\mathcal{F}_{\sigma\varepsilon}\,
\mathcal{F}^{\nu\mu}- \mathcal{C}_{\mu}^{\nu \varepsilon}\,\mathcal{F}_{\lambda \varepsilon} \, \mathcal{F}^{\lambda \mu} \Big) \nonumber\\
&-& \rho_{\nu}^{\omega}(A) \,\rho_{\mu}^{\alpha}(A) \,\mathcal{C}_{\sigma}^{\sigma\xi}\,\mathcal{F}^{\mu\nu} \,\partial_{\xi}A_{\alpha},  \la{appEq1}
\eea
and
\be
\frac{\partial M_A}{\partial A_{\omega}}\cdot\mathcal{L} - \frac{\partial M_A}{\partial x^{\sigma}}\frac{\partial \mathcal{L}}{\partial (\partial_{\sigma} A_{\omega})}
=  M_A\,\rho_{\nu}^{\omega} \,\rho_{\mu}^{\alpha} \,\mathcal{C}_{\sigma}^{\sigma\xi}\,\mathcal{F}^{\mu\nu} \,\partial_{\xi}A_{\alpha} + 
M_A\,\rho_{\nu}^{\omega}\,\big(\mathcal{C}_{\sigma}^{\sigma\nu}\,\mathcal{L}-\mathcal{C}_{\sigma}^{\sigma\lambda}\, \mathcal{F}_{\lambda\alpha}\,\mathcal{F}^{\alpha\nu}\big). \la{appEq2}
\ee
Substituting these relations into Eq.~\eqref{ELderive1}, we arrive at the identity
\be
\mathcal{E}_{EL}^{\omega}(x) = M_A(x)\,\rho^{\omega}_{\nu}(A)\, \mathcal{E}_{G}^{\nu}(x). \la{ELproofeq}
\ee
Since the matrix $\rho$ is non-degenerate, the equations~\eqref{dMe} and~\eqref{coveq} are equivalent.

\end{proof}}
 {
The integrating factor $M_A(x)$, yielding a gauge-invariant action, is essential for the above proof. Indeed,
it renders the cancellation of the \emph{non} gauge-covariant contributions to $\mathcal{E}_{G}^{\nu}(x)$, which arise from the last line of Eq.~\eqref{appEq1}. 
It is instructive to consider a one-parameter family of integrating factors
\be
M_{A}^Q(x) := \big(M_A(x)\big)^Q,\qquad Q\in\mathbb{R}, 
\ee
which do \emph{not} yield a gauge-invariant action, unless $Q=1$. By using Eq.~\eqref{appEq2} we get:
\bea
&&\frac{1}{M_{A}^Q(x)} \bigg(\frac{\partial M_{A}^Q}{\partial A_{\omega}}\cdot\mathcal{L} - \frac{\partial M_{A}^Q}{\partial x^{\sigma}}\frac{\partial \mathcal{L}}{\partial (\partial_{\sigma} A_{\omega})} \bigg) =\frac{Q}{M_{A}(x)} \bigg(\frac{\partial M_{A}}{\partial A_{\omega}}\cdot\mathcal{L} - \frac{\partial M_{A}}{\partial x^{\sigma}}\frac{\partial \mathcal{L}}{\partial (\partial_{\sigma} A_{\omega})} \bigg)\nonumber\\
&&=  Q\,\rho_{\nu}^{\omega}\,\Big(\rho_{\mu}^{\alpha} \,\mathcal{C}_{\sigma}^{\sigma\xi}\,\mathcal{F}^{\mu\nu} \,\partial_{\xi}A_{\alpha} + 
\big(\mathcal{C}_{\sigma}^{\sigma\nu}\,\mathcal{L}-\mathcal{C}_{\sigma}^{\sigma\lambda}\, \mathcal{F}_{\lambda\alpha}\,\mathcal{F}^{\alpha\nu}\big)\Big),
\eea
and we see that our field equations can be rewritten in the form 
\be
\mathcal{E}_{G}^{\mu}(x,Q) = 0,
\ee
with
\bea
\mathcal{E}_{G}^{\mu}(x,Q) &:=& \frac{1}{M_{A}^Q(x)} \,\big[\rho^{-1}(A)\big]_{\omega}^{\mu}\,\mathcal{E}_{EL}^{\omega}(x)\nonumber\\
&=& \mathcal{D}_{\xi} \mathcal{F}^{\xi\mu} + \frac{1}{2}\, \mathcal{F}_{\lambda\omega}\,\mathcal{C}_{\nu}^{\lambda\omega}\, \mathcal{F}^{\mu\nu} 
 -\mathcal{F}_{\lambda \omega} \, \mathcal{C}_{\nu}^{\mu \omega}\,\mathcal{F}^{\lambda \nu} 
  - \frac{Q}{4}\,\big( \mathcal{C}_{\nu}^{\nu \mu}\, \mathcal{F}_{\lambda\omega}\,\mathcal{F}^{\lambda\omega} 
+ 4\,\mathcal{C}_{\nu}^{\nu \lambda}\,\mathcal{F}_{\lambda \omega}\,\mathcal{F}^{\omega \mu}\big) \nonumber\\
&+&(Q-1)\,\rho_{\lambda}^{\alpha} \,\mathcal{C}_{\sigma}^{\sigma\xi}\,\mathcal{F}^{\lambda\mu} \,\partial_{\xi}A_{\alpha}.
\eea
Though all the terms in the second line of this relation are manifestly gauge-covariant, the last line, which shows up  {for non-unimodular $\mathfrak{g}$} at $Q\neq 1$, breaks the gauge-covariance.  
}

\section{ $\kappa$-Minkowski case} \la{KappaSec}
 {Our results are directly applicable to
the general $\kappa$-Minkowski noncommutativity,
\be
[x^{\mu},x^{\nu}]_{\star} = \ii\,\kappa^{-1} \,(v^{\mu}x^{\nu} -v^{\nu}x^{\mu} ),\qquad v \in \mathbb{R}^{d}.
\ee
The corresponding structure constants are given by
\be
\mathcal{C}^{\mu\nu}_{\sigma} = \kappa^{-1}\,\big(v^{\mu}\, \delta^{\nu}_{\sigma} - v^{\nu}\,\delta^{\mu}_{\sigma}\big),
\ee
and hence
\be
\mathcal{C}^{\sigma\nu}_{\sigma} = - \kappa^{-1}\,(d-1)\,v^{\nu} \neq 0 .
\ee
}

Interestingly, the field equations~\eqref{coveq} have already appeared in the literature,   in the context of the conventional $\kappa$-Minkowski noncommutativity,  $v^{\mu} = \delta_0^{\mu}$,  at $d=4$, see Eq.~(3.4) of~\cite{our2023} at $\alpha=-1/4$. However, an admissible Lagrangian formalism for the $\kappa$-Minkowski case has not yet been developed in~\cite{our2023}, so the mentioned field equations were introduced without any reference to the action principle according to the following guiding lines:
\begin{itemize}
 \item{gauge covariance,} 
 \item{the correct commutative limit,} 
 \item{the existence of a reasonable constraint on the field equations, generalising the Noether identity to the non-Lagrangian setting.}
 \end{itemize}
The last property was established through a direct calculation by using the explicit expressions:
\be
\gamma(A)  =  \left(
\begin{array}{cccc}
1 &-  \frac{A_1}{\kappa} &- \frac{A_2}{\kappa} & -  \frac{A_3}{\kappa} \\
0 &\,\,\,\,\,1 &\,\,\,\,\,0 &\,\,\,\,\,0 \\
0 &\,\,\,\,\,0 &\,\,\,\,\,1 &\,\,\,\,\,0 \\
0 &\,\,\,\,\,0 &\,\,\,\,\,0 &\,\,\,\,\,1 
\end{array}
\right), 
\qquad 
\rho(A)  =  \left(
\begin{array}{cccc}
1 &0 &0 & 0 \\
0 & e^{\frac{A_0}{\kappa} }&0 &0 \\
0 &0 &e^{ \frac{A_0}{\kappa} }&0 \\
0 &0 &0 &e^{ \frac{A_0}{\kappa}} 
\end{array}
\right) \la{newRG}
\ee
for universal-equivalent solutions of the master equations~\eqref{master}.  {We emphasise that the analysis of~\cite{our2023} does not cover other non-unimodular cases, such as the general $\kappa$-Minkowski.}

By substituting~\eqref{newRG} into the definition~\eqref{mudef} of $M_{A}$, we immediately obtain
\be
M_{A}(x) = \exp{\big(-3\kappa^{-1} A_0(x) \big)}, \la{kappaIF}
\ee
so the corresponding action becomes
\be
S_{\mathbf{g}}[A] = \int_{\mathcal{M}}\dd x\, \exp\big(-3\kappa^{-1} A_0(x)\big )\bigg(- \frac{1}{4}\, \mathcal{F}_{\mu\nu}(x)\, \mathcal{F}^{\mu\nu}(x)  \bigg). \la{kappaAction}
\ee
 {Remarkably, at the leading order in $\kappa^{-1}$, the integrating factor~\eqref{kappaIF} perfectly agrees with the field-dependent volume factor proposed in~\cite{Dimitrijevic:2005xw} within the perturbative first-order approach to noncommutative electrodynamics on the $\kappa$-Minkowski spacetime, see Eqs.~(4.12) and~(4.14) of that reference.

Since, as established, } Eq.~(3.4) of~\cite{our2023} at $\alpha=-1/4$ can be obtained from the action principle, the corresponding constraint (Eq.(3.5) of~\cite{our2023} at $\alpha=-1/4$)
\be
\mathcal{D}_{\mu}\,\mathcal{E}^{\mu}_{G} = -\,\mathcal{C}_{\xi}^{\mu\nu}\,\mathcal{F}_{\mu\nu}\,\mathcal{E}^{\xi}_{G}\,,
\ee
is a true Noether identity, representing Eq.~\eqref{gind} in a manifestly gauge-covariant form.

Of course, one may wonder whether the field equations~(3.4) of~\cite{our2023} for other values of the parameter $\alpha$  can be obtained from the action~\eqref{invaction} for some non-minimal choice of the Lagrangian density, cf. Remark~1 after Proposition~\ref{prop5}. We do not exclude this possibility; however, such an analysis goes beyond the scope of the present paper.

\section{Summary and perspectives}
We have addressed the problem of finding a local gauge-invariant classical action for a generic Lie-Poisson electrodynamics.  
The most important results of the paper are given by Propositions~\ref{prop4}  and~\ref{propMaxwell}, which provide the action and the corresponding deformed Maxwell equations for any Lie-Poisson gauge model.
These propositions generalize the admissible Lagrangian models of~\cite{our2023} to non-unimodular Lie-algebra-type noncommutativities which include the  $\kappa$-Minkowski case. 

 Technically, the main novelty compared to our previous study~\cite{our2023} is the field-dependent integrating factor~\eqref{mudef}, which enables us to construct the action~\eqref{invaction}, valid for all Lie-algebra-type noncommutativities. For the universal and universal-equivalent realizations of Lie-Poisson electrodynamics, we have calculated this integrating factor explicitly, see Eqs.~\eqref{mufirs} and~\eqref{musec}, respectively.  
At the semiclassical level,~\eqref{mudef} generalizes the volume factor proposed in~\cite{Dimitrijevic:2005xw}, extending it to arbitrary order in the deformation parameter and to arbitrary Lie-algebra-type noncommutativity. 

It is worth noticing that an alternative approach~\cite{Kupriyanov:2023qot}, based on a different deformed field strength, does not require any special integrating factor. However, the local action (4.10) of that reference introduces a metric tensor $g_{\mu\nu}^{\Sigma}$ depending on the gauge field $A$ and its first derivatives $\partial A$. This tensor has been computed explicitly only for the case of constant noncommutativity, while for Lie-algebra-type noncommutativities the construction provides only a set of prescriptions. In particular, explicit deformed Maxwell equations are not obtained within that framework.

We applied our machinery to the four-dimensional $\kappa$-Minkowski case and obtained a quite simple, albeit nontrivial, gauge-invariant action~\eqref{kappaAction}, thereby giving a Lagrangian formulation to the field equations~(3.4) of~\cite{our2023} at $\alpha=-1/4$,  proposed in the quoted paper on general grounds  without any reference to the action principle.  
We also established the \emph{universality} of these deformed Maxwell equations. While, as explained in Sec.~\ref{KappaSec}, the analysis of~\cite{our2023} was performed for a four-dimensional $\kappa$-Minkowski model only, in Sec.~\ref{action} we have shown that the field equations~\eqref{coveq} are the true Euler-Lagrange equations for a generic Lie-Poisson electrodynamics.

The Lagrangian formulation opens further prospects for a Hamiltonian analysis and subsequent canonical quantization of the model.  {Apart from that, by combining our results with those of~\cite{Basilio:2024bir}, one can obtain the deformed Maxwell equations in the presence of a charged particle's current, cf. Remark~3 after Proposition~\ref{prop5}. These nonhomogeneous equations would allow an analysis of the electromagnetic radiation by moving particles in the Lie-Poisson setting. In particular, it would be interesting to study deformed Liénard-Wiechert potentials and bremsstrahlung.} And finally, our results point at the right direction for searching for a classical action invariant under the full algebra~\eqref{ncalg} of noncommutative $U(1)$ gauge transformations in the $\kappa$-Minkowski and other cases of non-cyclic star-products. Some steps beyond the semiclassical approximation were already taken perturbatively in~\cite{Abla:2022wfz} within the $L_{\infty}$ formalism.

 One of the most important open questions in the Lie-Poisson gauge theory concerns the deformed space-time symmetries. While all the research of Refs.~\cite{our2023,Kupriyanov:2019cug,Kurkov:2021kxa,Kupriyanov:2020axe,Abla:2022wfz,Kupriyanov:2021aet,Kupriyanov:2021cws,Kupriyanov:2022ohu,Kupriyanov:2023qot,DiCosmo:2023wth,Bascone:2024mxs,Kupriyanov:2024dny,Sharapov:2024bbu, Basilio:2024bir,Abla:2024wtr} is based on the deformed gauge algebra~\eqref{pga}, the deformed Poincaré transformations have never been analysed. The Poisson bracket~\eqref{Pbre} entering both the gauge algebra and the deformed field strength clearly breaks the usual Poincaré symmetry, however, one may look for the \emph{deformed} space-time transformations. At this point, the above link between the present study and that of~\cite{Dimitrijevic:2005xw} gives us optimism: the latter exhibits the $\kappa$-Poincaré invariance. A deeper analysis and comparison of the two approaches may hint towards the right direction for the search of the deformed space-time symmetries in the Lie-Poisson formalism. We hope to address this problem in future publications. 

\begin{appendix}
 {
\section{Technicalities} \la{appA}
Throughout this appendix, we use the short-hand notation 
$
\partial_A^{\mu} = \frac{\partial}{\partial A_{\mu}}
$
and omit the argument $A(x)$ of the matrices $\gamma$ and $\rho$ whenever it does not cause confusion.

\subsection{Derivation of Eq.~\eqref{appEq1}}
A straightforward calculation yields:
\bea
\frac{\partial {\mathcal{F}_{\mu\nu}}}{\partial A_{\omega}} 
&=& \big(\partial_A^{\omega} \rho_{\mu}^{\alpha}\big) \, \rho_{\nu}^{\beta}\, \hat{\mathcal{F}}_{\alpha\beta}
+\rho_{\nu}^{\beta}\, \rho_{\mu}^{\alpha} \,\big(\partial_{A}^{\omega}\gamma_{\alpha}^{\xi}\big)\, \partial_{\xi} A_{\beta} - (\mu\longleftrightarrow \nu) \nonumber\\
&=& \rho_{\nu}^{\beta}\,\Big(\big(\partial_A^{\omega}\rho_{\mu}^{\alpha}-\partial_A^{\alpha}\rho_{\mu}^{\omega}\big)\,\hat{ {\mathcal{F}}}_{\alpha\beta}
+\{\rho_{\mu}^{\omega},A_{\beta}\} - \gamma^{\xi}_{\beta}\,\partial_{\xi}\rho_{\mu}^{\omega}\Big) - (\mu\longleftrightarrow \nu),
\eea
where we introduced the notation
\be
\hat{\mathcal{F}}_{\xi\lambda} := \gamma_{\xi}^{\sigma}(A)\,\partial_{\sigma}A_{\lambda}
   -\gamma_{\lambda}^{\sigma}(A)\,\partial_{\sigma}A_{\xi}+\{A_{\xi},A_{\lambda}\} \la{hatFdef}
\ee
and applied the second master equation~\eqref{master} to $ \rho_{\mu}^{\alpha} \,\big(\partial_{A}^{\omega}\gamma_{\alpha}^{\xi}\big)$.  According to Proposition~2.1 of~\cite{our2023},
\be
\partial_A^{\omega}\rho_{\mu}^{\alpha}-\partial_A^{\alpha}\rho_{\mu}^{\omega} = \rho_{\sigma}^{\omega}\,\rho_{\varepsilon}^{\alpha}\,
\mathcal{C}_{\mu}^{\sigma\varepsilon}, \la{prop21our2023}
\ee
therefore,
\be
\frac{\partial {\mathcal{F}_{\mu\nu}}}{\partial A_{\omega}}  
= \rho_{\sigma}^{\omega}\,\mathcal{C}_{\mu}^{\sigma\varepsilon}\,\mathcal{F}_{\varepsilon\nu} 
+ \rho_{\nu}^{\beta}\,\big(\{\rho_{\mu}^{\omega},A_{\beta}\} - \gamma_{\beta}^{\xi}\,\partial_{\xi}\rho_{\mu}^{\omega}\big) - (\mu\longleftrightarrow\nu),
\la{firsder1}
\ee
where we took into account that,
\be
\mathcal{F}_{\varepsilon\nu} = \rho_{\varepsilon}^{\alpha}\,\rho_{\nu}^{\beta}\,\hat{\mathcal{F}}_{\alpha\beta}. \la{hatFFrel}
\ee
The relation~\eqref{firsder1} yields the following expression for the first term of the l.h.s. of Eq.~\eqref{appEq1}: 
\bea
\frac{\partial\mathcal{L}}{\partial A_{\omega}} = -\frac{1}{2}\,\frac{\partial\mathcal{F}_{\mu\nu}}{\partial A_{\omega}}\,\mathcal{F}^{\mu\nu}% \nonumber\\
= -\rho_{\sigma}^{\omega}\,\mathcal{C}_{\mu}^{\sigma\varepsilon}\,\mathcal{F}_{\varepsilon\nu}\,\mathcal{F}^{\mu\nu} 
+\rho_{\nu}^{\beta}\,\big(\{A_{\beta},\rho_{\mu}^{\omega}\} + \gamma_{\beta}^{\xi}\,\partial_{\xi}\rho_{\mu}^{\omega}\big)\,\mathcal{F}^{\mu\nu}. \la{firsder2}
\eea

To analyze the second term of the l.h.s. of Eq.~\eqref{appEq1}, one can easily see that:
\be
\frac{\partial \mathcal{F}_{\mu\nu}}{\partial(\partial_{\sigma}A_{\omega})} = \rho_{\nu}^{\omega}\,\rho_{\mu}^{\alpha}\,\big(\gamma^{\sigma}_{\alpha} + 
x^{\beta}\,\mathcal{C}^{\xi\sigma}_{\beta}\,\partial_{\xi}A_{\alpha}\big) - (\mu\longleftrightarrow\nu). \la{FmunuAos}
\ee
By noticing that
\be
\frac{1}{2}\,\frac{\partial \mathcal{F}_{\mu\nu}}{\partial(\partial_{\sigma}A_{\omega})}\,\partial_{\sigma}\mathcal{F}^{\mu\nu} = 
\rho_{\nu}^{\omega} \,\mathcal{D}_{\mu}\mathcal{F}^{\mu\nu},
\ee
and renaming the mute indices in a suitable way, we obtain: 
\bea
\!\!\!\!\!\!\!\!-\frac{\partial}{\partial x^{\sigma}} \frac{\partial \mathcal{L}}{\partial(\partial_{\sigma}A_{\omega})} 
&=& \frac{1}{2}\,\partial_{\sigma}\bigg(\frac{\partial \mathcal{F}_{\mu\nu}}{\partial(\partial_{\sigma}A_{\omega})}\,\mathcal{F}^{\mu\nu}\bigg) \nonumber\\
&=&\rho_{\sigma}^{\omega} \,
\mathcal{D}_{\mu}\mathcal{F}^{\mu\sigma} - \big(\partial_{\xi}\big(\rho^{\beta}_{\nu}\,\gamma^{\xi}_{\beta}\,\rho_{\mu}^{\omega}\big) 
+ \{A_{\beta},\rho_{\mu}^{\omega}\,\rho_{\nu}^{\beta}\}
-\rho^{\omega}_{\mu}\,\rho_{\nu}^{\beta}\,\mathcal{C}^{\sigma\xi}_{\sigma}\,\partial_{\xi}A_{\beta}\big)\,\mathcal{F}^{\mu\nu} . {\la{firsder3}}
\eea

By combining the equations~\eqref{firsder2} and~\eqref{firsder3}, we get:
\bea
\frac{\partial\mathcal{L}}{\partial A_{\omega}}-\frac{\partial}{\partial x^{\sigma}} \frac{\partial \mathcal{L}}{\partial(\partial_{\sigma}A_{\omega})}&=&
\rho_{\sigma}^{\omega}\,\big(\mathcal{D}_{\mu}\mathcal{F}^{\mu\sigma}-\,\mathcal{C}_{\mu}^{\sigma\varepsilon}\,\mathcal{F}_{\varepsilon\nu}\,\mathcal{F}^{\mu\nu} \big)+\rho^{\omega}_{\mu}\,\rho_{\nu}^{\beta}\,\mathcal{C}^{\sigma\xi}_{\sigma}\,\partial_{\xi}A_{\beta}\,\mathcal{F}^{\mu\nu} \nonumber\\
&-&\rho_{\sigma}^{\omega}\,\big(\partial_{\xi}\big(\rho^{\beta}_{\nu}\,\gamma^{\xi}_{\beta}\big) 
+ \{A_{\beta},\rho_{\nu}^{\beta}\} \big)\,\mathcal{F}^{\sigma\nu}. \la{firsder4}
\eea
The straightforward calculation yields:
\bea
\partial_{\xi}\big(\rho^{\beta}_{\nu}\,\gamma^{\xi}_{\beta}\big) 
+ \{A_{\beta},\rho_{\nu}^{\beta}\} &=& \big(\gamma^{\xi}_{\beta}\,\partial_{A}^{\alpha}\rho^{\beta}_{\nu}+\rho^{\beta}_{\nu}\,\partial_{A}^{\alpha}\gamma^{\xi}_{\beta}\big)\,\partial_{\xi}A_{\alpha} -\partial_{A}^{\alpha}\rho_{\nu}^{\beta}\,\{A_{\alpha},A_{\beta}\} \nonumber\\
&=& \big(\partial_{A}^{\alpha}\rho^{\beta}_{\nu} - \partial_{A}^{\beta}\rho^{\alpha}_{\nu}\big)\,\gamma^{\xi}_{\beta}\,\partial_{\xi}A_{\alpha} -\partial_{A}^{\alpha}\rho_{\nu}^{\beta}\,\{A_{\alpha},A_{\beta}\}\nonumber\\
&=&-\frac{1}{2}\underbrace{\big(\partial_{A}^{\alpha}\rho^{\beta}_{\nu} - \partial_{A}^{\beta}\rho^{\alpha}_{\nu}\big)}_{\mathcal{C}_{\nu}^{\lambda\phi}\rho^{\alpha}_{\lambda}\,\rho^{\beta}_{\phi}}\underbrace{\big(\gamma^{\xi}_{\alpha}\,\partial_{\xi}A_{\beta} -\gamma^{\xi}_{\beta}\,\partial_{\xi}A_{\alpha}+
\{A_{\alpha},A_{\beta}\}\big)}_{\hat{\mathcal{F}}_{\alpha\beta}}\nonumber\\
&=&-\frac{1}{2}\,\mathcal{C}_{\nu}^{\lambda\phi}\,\mathcal{F}_{\lambda\phi}, \la{longinterme}
\eea
where we applied the second master equation~\eqref{master} to the combination $\rho^{\beta}_{\nu}\,\partial_{A}^{\alpha}\gamma^{\xi}_{\beta}$ and used the relations~\eqref{prop21our2023}, \eqref{hatFdef} and~\eqref{hatFFrel} at the last step. Substituting the equality~\eqref{longinterme} in Eq.~\eqref{firsder4},
we arrive at the desired relation~\eqref{appEq1}.

\subsection{Derivation of Eq.~\eqref{appEq2}}
First we prove two useful formulae, which make the required derivation straightforward. 
\begin{proposition}\label{propA1}
The integrating factor $M_A$, and the matrices $\rho(A)$ and $\gamma(A)$ obey the following equations: 
\bea
\frac{\partial M_A}{\partial A_{\omega}} = M_A\,\mathcal{C}^{\beta\varepsilon}_{\beta}\,\rho_{\varepsilon}^{\omega}(A), \la{eA1}
\eea
and
\be
\mathcal{C}^{\beta\varepsilon}_{\beta}\,\rho_{\varepsilon}^{\lambda}(A)\,\gamma_{\lambda}^{\sigma}(A) = \mathcal{C}_{\beta}^{\beta\sigma}. \la{eA2}
\ee
\end{proposition}
\begin{proof}
In analogy with the derivation of~\eqref{todemost}, by using Eq.~\eqref{interme3} along with 
\be
\big[\rho_{\mathbf{u}}\big]^{\xi}_{\sigma}(A) = \delta^{\xi}_{\sigma}+ \sum_{k=1}^{\infty}\frac{\big[\hat{A}^k\big]^{\xi}_{\sigma}}{(k+1)!}, \la{rhoexp}
\ee
we arrive at the following peculiar property of the universal realization:
\be
 \mathcal{C}_{\nu}^{\nu\xi} \,\big[\rho_{\mathbf{u}}\big]_{\xi}^{\sigma}
 =    \mathcal{C}_{\nu}^{\nu\sigma}. \la{todemost2}
\ee
Equations~\eqref{todemost2}, \eqref{todemost} and~\eqref{mufirs} immediately yield the desired relations~\eqref{eA1} and~\eqref{eA2} for the universal case,
thus in what follows we focus on the universal-equivalent realization.

To prove ~\eqref{eA1}, we start from the explicit expression~\eqref{musec} and then apply Eq.~\eqref{todemost2} for 
$\rho_{\mathbf{u}}\big(\tilde{A}(A)\big)$:
\be
\frac{\partial M_A}{\partial A_{\omega}}  = M_A\,\mathcal{C}^{\beta\varepsilon}_{\beta}\, \frac{\partial \tilde{A}_{\varepsilon}}{\partial A_{\omega}}
=  M_A\,\mathcal{C}^{\beta\theta}_{\beta}\,\underbrace{\big[\rho_{\mathbf{u}}\big]_{\theta}^{\varepsilon}\big(\tilde{A}(A)\big)\, \frac{\partial \tilde{A}_{\varepsilon}}{\partial A_{\omega}}}_{\rho_{\theta}^{\omega}(A)} = M_A\,\mathcal{C}^{\beta\theta}_{\beta}\,\rho_{\theta}^{\omega}(A),
\ee 
where we used~\eqref{newGR} at the last step.

The second equality is a direct consequence of Eq.~\eqref{secsteinterme} along with Eq.~\eqref{todemost2} and Eq.~\eqref{todemost}, applied for $\gamma_{\mathbf{u}}\big(\tilde{A}(A)\big)$ and $\rho_{\mathbf{u}}\big(\tilde{A}(A)\big)$:
\be
\mathcal{C}^{\beta\varepsilon}_{\beta}\,\rho_{\varepsilon}^{\lambda}(A)\,\gamma_{\lambda}^{\sigma}(A) 
=\underbrace{\mathcal{C}^{\beta\varepsilon}_{\beta}\,\big[\rho_{\mathbf{u}}\big]_{\varepsilon}^{\lambda}\big(\tilde{A}(A)\big)}_{\mathcal{C}^{\beta\lambda}_{\beta}}\,\big[\gamma_{\mathbf{u}}\big]_{\lambda}^{\sigma}\big(\tilde{A}(A)\big) 
= \mathcal{C}_{\beta}^{\beta\sigma}.
\ee 
\end{proof}
By using Eq.~\eqref{eA1} along with Eq.~\eqref{FmunuAos}, for the left-hand side of~\eqref{appEq2} we obtain:
\be
\frac{\partial M_A}{\partial A_{\omega}}\cdot\mathcal{L} - \frac{\partial M_A}{\partial x^{\sigma}}\frac{\partial \mathcal{L}}{\partial (\partial_{\sigma} A_{\omega})}
=M_A\,\rho_{\nu}^{\omega}\,\Big(\mathcal{C}^{\beta\nu}_{\beta}\,\mathcal{L}
+ \mathcal{C}_{\beta}^{\beta\varepsilon}\,\rho_{\mu}^{\alpha}\,\rho_{\varepsilon}^{\lambda}\,\big(\gamma_{\alpha}^{\sigma}\,\partial_{\sigma}A_{\lambda}
+\{A_{\alpha},A_{\lambda}\}\big)\,\mathcal{F}^{\mu\nu} \Big)  \la{secoderinterme}
\ee
By adding and subtracting a suitable term, we complete our derivation of~Eq.~\eqref{appEq2}:
\bea
 \mathbf{r.h.s.}~\mbox{Eq.~\eqref{secoderinterme}}&=& M_A\,\rho_{\nu}^{\omega}\,\Big(\mathcal{C}^{\beta\nu}_{\beta}\,\mathcal{L}
+ \mathcal{C}_{\beta}^{\beta\varepsilon}\,\underbrace{\rho_{\mu}^{\alpha}\,\rho_{\varepsilon}^{\lambda}\,\big(\gamma_{\alpha}^{\sigma}\,\partial_{\sigma}A_{\lambda}
-\gamma_{\lambda}^{\sigma}\,\partial_{\sigma}A_{\alpha}
+\{A_{\alpha},A_{\lambda}\}\big)}_{\mathcal{F}_{\mu\varepsilon}}\,\mathcal{F}^{\mu\nu} \Big) \nonumber\\
&+& M_A\,\rho_{\nu}^{\omega}\,\rho_{\mu}^{\alpha}\,\underbrace{\mathcal{C}_{\beta}^{\beta\varepsilon}\,\rho_{\varepsilon}^{\lambda}\,\gamma_{\lambda}^{\sigma}}_{\mathcal{C}^{\beta\sigma}_{\beta}}\,\partial_{\sigma}A_{\alpha}
\,\mathcal{F}^{\mu\nu}  \nonumber\\
&=&M_A\,\rho_{\nu}^{\omega}\,\Big(\mathcal{C}^{\beta\nu}_{\beta}\,\mathcal{L}
- \mathcal{C}_{\beta}^{\beta\varepsilon}\,\mathcal{F}_{\varepsilon\mu}\,\mathcal{F}^{\mu\nu} \Big)
+ M_A\,\rho_{\nu}^{\omega}\,\rho_{\mu}^{\alpha}\,\mathcal{C}^{\beta\sigma}_{\beta}\,\partial_{\sigma}A_{\alpha}
\,\mathcal{F}^{\mu\nu} %=  \mathbf{r.h.s.}~\mbox{\eqref{appEq2}}
,
\eea
where we applied Eq.~\eqref{eA2} at the last step.
}
\end{appendix}

{\section*{Acknowledgements} M.K. is grateful to Vlad Kupriyanov, Fedele Lizzi, Alexey Sharapov and Patrizia Vitale for fruitful discussions on the present and related papers.}

%\newpage

\end{document}